\title{Make Every Word Count: Adaptive Byzantine Agreement with Fewer Words}
\author{Shir Cohen}{Technion, Israel}{shirco@campus.technion.ac.il}{}{Supported by the Adams Fellowship Program of the
Israel Academy of Sciences and Humanities.}
\author{Idit Keidar}{Technion, Israel}{idish@ee.technion.ac.il}{}{}
\author{Alexander Spiegelman}{Aptos, USA}{sasha.spiegelman@gmail.com}{}{}
\authorrunning{S. Cohen, I. Keidar, and A. Spiegelman} 
\keywords{Byzantine Agreement, Byzantine Broadcast, Adaptive communication} 
\newcommand{\lr}[1]{\langle #1 \rangle}
\newcommand{\helpreq}{\mathsf{help\_req}\xspace}
\newcommand{\idk}{\mathsf{idk}\xspace}
\newcommand{\propose}{\mathsf{propose}\xspace}
\newcommand{\commit}{\mathsf{commit}\xspace}
\newcommand{\vote}{\mathsf{vote}\xspace}
\newcommand{\decide}{\mathsf{decide}\xspace}
\newcommand{\finalize}{\mathsf{finalize}\xspace}
\newcommand{\helpre}{\mathsf{help\_req}\xspace}
\newcommand{\help}{\mathsf{help}\xspace}
\newcommand{\fb}{\mathsf{fallback}\xspace}
\DeclarePairedDelimiter{\ceil}{\lceil}{\rceil}
\begin{document}

\maketitle

\begin{abstract}
Byzantine Agreement (BA) is a key component in many distributed systems. While Dolev and Reischuk have proven a long time ago that quadratic communication complexity is necessary for worst-case runs, the question of what can be done in practically common runs with fewer failures remained open.
In this paper we present the first Byzantine Broadcast algorithm with $O(n(f+1))$ communication complexity in a model with resilience of $n=2t+1$, where $0\leq f\leq t$ is the actual number of process failures in a run. And for BA with strong unanimity, we present the first optimal-resilience algorithm that has linear communication complexity in the failure-free case and a quadratic cost otherwise.
\end{abstract}



\section{Introduction}

Byzantine Agreement (BA) is a key component in many distributed systems.
As these systems are being used at larger scales, there is an increased need to find efficient solutions for BA.
Arguably, the most important aspect of an efficient BA solution is its communication costs.
That is, how much information needs to be transferred in the network to solve the BA problem.
Indeed, improving the communication complexity, often measured as word complexity, was the focus of many recent works and deployed systems~\cite{Abraham2018HotStuffTL,Algorand,spiegelman2021search,abraham2019asymptotically,momose2020optimal,cohen2020not}.

In the BA problem, a set of $n$ processes attempt to agree on a \emph{decision} value despite the presence of Byzantine processes.
One of the properties of a BA algorithm is a threshold $t$ on how many Byzantine processes it can withstand.
Namely, the algorithm is correct as long as up to $t$ processes are corrupted in the course of a run.
In this paper we focus on $n=2t+1$ and we assume a trusted setup of a public-key infrastructure (PKI) that enables us to use a threshold signature scheme~\cite{shoup2000practical,boneh2001short,cachin2005random}.

A large and growing body of literature has investigated how to reduce the word complexity of BA algorithms. 
Recently, Momose and Ren~\cite{momose2020optimal} have presented a synchronous protocol with $O(n^2)$ words, which meets Dolev and Reischuk's long-standing lower bound~\cite{dolev1985bounds}.
Spiegelman~\cite{spiegelman2021search} considered the more common case, where the number of actual failures, denoted by $f$, is smaller than $t$ with resilience of $n=3t+1$. 
In this paper we consider better resilience and ask:

\begin{center}
\textit{
Can we design a BA protocol with $O(n(f+1))$ communication complexity in runs with $f \leq t$ failures, where $n=2t+1$?}
\end{center}

Whereas Dolev and Reischuk's better-known lower bound applies to worst-case runs, they further proved a lower bound of $\Omega(nt)$ signatures in failure-free runs ($f=0$) in a model with a PKI.
At the time, one could have thought that this bound extends to the communication complexity, rendering it $\Omega(nt)$ even with small $f$ values. However, the introduction of threshold signature schemes~\cite{desmedt1987society,shoup2000practical,boneh2001short,cachin2005random} exposed the possibility to compact many signatures into one word, potentially saving many words.

In this paper, we first revisit the original problem as stated in Dolev and Reischuk's work.
In this problem there is a single sender who proposes a value and we refer to this problem as Byzantine Broadcast (BB). We prove that although $O(nt)$ signatures are inevitable, $O(nt)$ messages are not necessary with $f\in o(t)$ failures by presenting an adaptive BB solution with $O(n(f+1))$ words.

The idea behind our algorithm is to reduce this problem to another BA variant. There is a simple reduction from BB to BA with the strong unanimity validity property (from hereon: \emph{strong BA}), which states that if all correct processes propose the same value, this is the only allowed decision. In this reduction, the sender initially sends its value to all other processes who then run a BA solution. Unfortunately, at the time of writing this paper no \emph{adaptive} strong BA was known. I.e., a strong BA solution where communication complexity depends on $f$, rather than on $t$. 
Instead, in Section~\ref{section:abb} we reduce the problem to a new \emph{weak BA} problem with a weaker validity property, \emph{unique validity}, which we define in this paper.
Since the publication of this paper, Elsheimy et al.~\cite{elsheimy2024deterministic} have closed the gap and provided an adaptive strong BA solution, building upon the techniques presented here. We also wish to thank them for identifying an error in our previous version. In this updated version, the weak BA algorithm has been corrected.

Intuitively, the validity condition of weak BA is somewhere between weak unanimity, where if all processes are correct and propose the same value this is the only allowed decision, and external validity~\cite{cachin2001secure}, where a decision value must satisfy some external predicate.
In weak BA, one can define its desired predicate and the requirement is that if all correct processes propose the same value and Byzantine processes cannot devise a value that satisfies the chosen predicate, then the decision must be valid. Otherwise, $\bot$ is allowed.

While the unique validity condition seems to be weak, it is surprisingly powerful when provided the ``right'' external predicate. For example, we can determine that a value is valid if it has at least $t+1$ unique signatures, assuring that some correct process in the system knows this value. 
Unique validity may be of independent interest as a tool for designing algorithms.
We present our adaptive weak BA in Section~\ref{section:aba}. The weak BA, in turn, exploits the quadratic solution by Momose and Ren~\cite{momose2020optimal}.
Figure~\ref{fig:ba_solutions} describes the relation between the various solutions.

\begin{figure}
    \centering
    \includegraphics[scale=0.5]{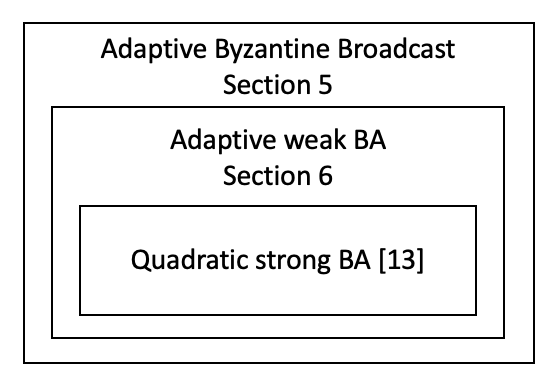}
    \caption{Relation between various Byzantine Agreement solutions. Each box uses the primitives within it.}
    \label{fig:ba_solutions}
\end{figure}

Finally, we consider strong BA. In Section~\ref{section:strong}, we present the first optimally resilient strong binary BA protocol with $O(n)$ communication complexity in the failure-free case. This leaves open the question whether a fully adaptive (to any $f$) strong BA protocol exists.
We summarize the results in Table~\ref{tab:compare}.

\begin{table}[]
\caption{Bounds on communication complexity of deterministic synchronous Byzantine Agreement algorithms with resilience $n=2t+1$.}
\label{tab:compare}

\begin{tabular}{|c|c|c|}
 \hline

  & \textbf{Upper Bound}   & \textbf{Lower Bound}    \\

\hline  
Byzantine Broadcast  &  $O(n(f+1))$ \textbf{Section~\ref{section:abb} + Section~\ref{section:aba}} & $\Omega(nf)$ ($\Omega(n^2)$ signatures)~\cite{dolev1985bounds}  \\
 \hline

Strong BA & $O(n^2)$ multi-valued Momose-Ren~\cite{momose2020optimal}         & $\Omega(nf)$ binary    \\

  & $O(n)$ with $f=0$, binary \textbf{ Section~\ref{section:strong} }    &   ($\Omega(n^2)$ signatures)~\cite{dolev1985bounds} \\

\hline
Weak BA &  $O(n(f+1))$ multi-valued \textbf{Section~\ref{section:aba}} & $\Omega(n)$\\
\hline

\end{tabular}
\end{table}

\section{Model and Preliminaries}

We consider a distributed system consisting of a well-known static set $\Pi$ of $n$ processes and an adaptive adversary. 
The adversary may adaptively corrupt up to  $t<n, n=2t+1$ processes in the course of a run.
A corrupted process is \emph{Byzantine}; it may deviate arbitrarily from the protocol. In particular, it may crash, fail to send or receive messages, and send arbitrary messages. As long as a process is not corrupted by the adversary, it is \emph{correct} and follows the protocol. We denote by $0\leq f\leq t$ the actual number of corrupted processes in a run.

\textbf{Cryptographic tools.}
We assume a trusted public-key infrastructure (PKI) and a computationally bounded adversary. Hence, we can construct and use a threshold signature scheme~\cite{shoup2000practical,boneh2001short,cachin2005random}. We denote by $\lr{m}_p$ the message 
$m$ signed by process $p$.
Using a $(k,n)$-threshold signature scheme, $k$ unique signatures on the same message $m$ can be batched into a threshold signature for $m$ with the same length as an individual signature. 
For simplicity we abstract away the details
of cryptography and assume the threshold signature schemes are ideal.
In practice, our results hold except with arbitrarily small probability, depending on the security parameters.

\textbf{Communication.}
Every pair of processes is connected via a reliable link.
If a correct process $p_i$ receives a message $m$ indicating that $m$ was sent by a correct process $p_j$, then $m$ was indeed generated by $p_j$ and sent to $p_i$. 
The network is synchronous. Namely, there is a known bound $\delta$ on message delays, allowing us to design protocols that proceed in rounds.
Specifically, if a correct process sends
a message to any other correct process at the beginning of some round, it is received by the end of the same round.

\textbf{Complexity.}
We use the following standard complexity notions~\cite{abraham2019asymptotically,spiegelman2021search,momose2020optimal}. While measuring complexity, we say that a \emph{word} contains a constant number of signatures and values from a finite domain, and each message contains at least 1 word.
The communication complexity of a protocol is the maximum number of words sent by all correct processes, across all runs. The \emph{adaptive} complexity is a complexity that depends on $f$.

\section{Problem Definitions}

We consider a family of agreement problems all satisfy agreement and termination defined as follows:

\begin{description}

    \item [Agreement] No two correct processes decide differently.
    \item [Termination] Every correct process eventually decides.
\end{description}

In addition, each variant of the problem satisfies some validity property.
In the Byzantine Broadcast (BB) problem, a designated sender has an input to broadcast to all $n$ processes. The goal is that all correct processes decide upon the sender's value. If the sender is Byzantine, however, it is enough that all correct processes decide upon some common value. Formally,

\begin{definition}[Byzantine Broadcast]
\label{def:BB}
In Byzantine Broadcast, a designated sender $\emph{sender}$ has an input value $v_\emph{sender}$ to broadcast to all processes, and each correct process decides on an output value $decision_i$. BB solution must satisfy agreement, termination and the following validity property:

\begin{description}
    \item [Validity] If \emph{sender} is correct, then all correct processes decide $v_\emph{sender}$.
\end{description}

\end{definition}

Byzantine Agreement (BA) is a closely related problem to BB. In this problem, a set $\Pi$ of $n$ processes each propose an initial value and they all attempt to reach a common decision. In addition, the decided value must be ``valid'' in some sense that makes the problem non-trivial.
 The classic notion of validity states that if all correct processes in $\Pi$ share the same initial value, then the decision must be on this value.
 This property is known as \emph{strong unanimity}, and it entails a limitation on the resilience of a protocol, requiring that $n\geq 2t+1$. 
For hereon we refer to BA with strong unanimity validity condition as \emph{strong BA}.
Formally,

\begin{definition}[Strong Byzantine Agreement]
\label{def:BA}
In Byzantine Agreement, each correct process $p_i\in\Pi$ proposes an input value $v_i$ and decides on an output value $decision_i$.
Any strong BA solution must satisfy agreement, termination and the following validity property:

\begin{description}
        \item[Strong unanimity] If all correct processes propose the same value $v$, then the output is $v$.

\end{description}

\end{definition}

 A different validity property requires that a decision satisfies some external boolean predicate (we call such value a \emph{valid} value). It is used under the assumption that all correct processes propose valid values.
 This is known as \emph{external validity}~\cite{cachin2001secure} and only requires $n>t$. 
 External validity by itself is trivial in case there is a well-known predefined value that satisfies the predicate.
 However, it is commonly used in settings with signatures, where valid values can be verified by all but generated only by specific users or sets thereof.
 For instance, consider a predicate that verifies that $v$ is signed by $n-t$ processes -- no process can unilaterally generate a default valid value.

Our notion of unique validity adopts external validity to allow default values to be decided in cases when there is no unanimous valid value. We say that a value $v$ \emph{exists} in a run of a BA protocol if $v$ is either the input value of a correct process or can be generated by a Byzantine process. E.g., any value signed by a non-Byzantine process cannot be generated locally by a Byzantine process.
Unique validity stipulates that there is a default value if and only if there exists more than one valid value in a BA run. Formally,

\begin{definition}[Weak Byzantine Agreement]
\label{def:wBA}
In weak Byzantine Agreement, each correct process $p_i\in\Pi$ proposes an input value $v_i$ and decides on an output value $decision_i$.
Any weak BA solution must satisfy agreement, termination and the following validity property:

\begin{description}
        \item[Unique Validity] 
        Assume an arbitrary predicate $\textit{validate}(v) \in \{\textit{true}, \textit{false}\}$ that can be computed locally. 
        If a correct process decides $v$ then either $v=\bot$ or $\textit{validate}(v) = \textit{true}$, and if $v=\bot$ then more than one valid value exists in the run.

\end{description}

\end{definition}

As the definition suggests, unique validity is satisfied in weak BA with respect to any chosen external predicate. This allows for the application level to determine the desired properties, and choose the relevant external predicate accordingly. As a simple example, one can think of a predicate that specifies that a value is valid if it is signed by at least $t+1$ processes stating that this value was their initial value. In this scenario, unique validity yields exactly the common strong unanimity property on the underlying signed values.

In fact, unique validity is a useful tool when designing distributed algorithms as it allows to use BA as a framework. Different applications may require different validity conditions, yet still unique validity prevents the system from having a trivial solution in the presence of Byzantine processes.
Note, in addition, that every solution to BA with external validity property immediately solves weak BA.

\section{Related Work}

The starting point of this work goes back to 1985 when Dolev and Reischuk proved two significant lower bounds for the Byzantine Broadcast problem. Specifically, they have studied the worst-case message complexity over all runs and proved it to be $\Omega(nt)$. Moreover, in the authenticated model, which was somewhat undeveloped at the time, they proved a lower bound of $\Omega(nt)$ signatures -- even in a failure-free run.

Since the publication of their fundamental results, the paradigm of complexity measurement has shifted. 
The number of messages is of little importance nowadays, compared to the number of words it entails.
The total number of words (the communication complexity) better reflects the load on the system and is commonly used today when analyzing distributed algorithms.
For example, Dolev and Reischuk presented in their paper a BB algorithm that matches their messages' lower bound.
It requires $O(nt)$ messages, but as a single message can be composed of many different signatures it requires a cubic number of words. It was not until recently that a solution with quadratic communication complexity was presented for synchronous BA with optimal resilience~\cite{momose2020optimal}.

Dolev and Reischuk's complementary lower bound on signatures does not translate to a bound on the communication complexity of an algorithm. Only a few years after Dolev and Reischuk's work, the threshold signature scheme was introduced~\cite{desmedt1987society}. This scheme allows multiple signatures to be compacted into a single combined signature of the same size. That is, a single word can carry multiple signatures.
In this work, we focus on the communication complexity of the BB and BA problems while taking advantage of such schemes. 

To make our algorithms efficient in real-world systems, we adjust the complexity to match the actual number of faults. Moreover, we do so without compromising the worst-case complexity. If all $t$ possibly Byzantine processes crash, the complexity of our algorithms is $O(nt)$. However, in most runs, where systems do not exhibit the worst crash patterns, the complexity is much lower. In fact, it is linear in the number of faults times $n$.

While consensus algorithms were designed to be adaptive in the number of failures over 30 years ago~\cite{dolev1990early}, these works focus on the number of rounds that it takes to reach a decision rather than on communication complexity.
A special case of adaptivity is focusing on failure-free runs. This problem was addressed both by Amdur et al.~\cite{amdur1992message} (only for crash failures) and by Hadzilacos and Halpern~\cite{hadzilacos1993message}. However, both works measure the number of messages rather than words and have sub-optimal communication complexity.

A recent work by Spiegelman~\cite{spiegelman2021search} tackled the problem of adaptive communication complexity in the asynchronous model.
It presents a protocol that achieves correctness in asynchronous runs and requires $O(ft+t)$ communication in synchronous runs. However, due to the need to tolerate asynchrony, its resilience is only $n\ge 3t+1$. This solution relies on threshold signatures schemes, as we do.

As noted also by Momose and Ren~\cite{momose2020optimal}, designing optimally-resilient protocols for the synchronous model limits the use of threshold signatures.
While this primitive has been used in various eventually synchronous and asynchronous works over the last few years~\cite{Abraham2018HotStuffTL,spiegelman2021search, naor2020expected,cachin2005random}, usually with a threshold of $n-t$. Using this threshold in settings with resilience $n=3t+1$, we get certificates signed by at least $t+1$ correct processes.
However, for a resilience of $n=2t+1$, this is no longer the case. The threshold signatures ``lose'' their power as $n-t=t+1$ for which no intersection properties between correct processes signing two distinct certificates can be derived. 
In this work, we exploit threshold signatures with this improved resilience by carefully choosing a better threshold for our needs, as we discuss in Section~\ref{section:aba}. We mention that although not using threshold schemes, Xiang et al.~\cite{xiang2021strengthened} also benefit from collecting more than $n-t$ signatures in some scenarios.

\section{From Weak BA to Adaptive Byzantine Broadcast}
\label{section:abb}

In this section we study the BB problem, and optimize its adaptive communication complexity over all runs.
We present a new BB protocol with resilience $n=2t+1$ and adaptive communication complexity of $O(n(f+1))$.

Recall that in the BB problem there is only one sender who aims to broadcast its initial value and have all correct processes agree on it. If the sender is Byzantine, it may attempt to cause disagreement across correct processes.
There is a known simple and efficient reduction from BB to strong BA. Given a strong BA solution, the designated sender starts by sending its value to all processes, and then they all execute the BA solution and decide on its output. It is easy to see that if the sender is correct, all correct processes begin the strong BA algorithm with the same input, and by strong unanimity they then decide upon the sender's value. 

However, trying to apply the same reduction from BB to weak BA no longer works.
If the sender is Byzantine, the correct processes do not have a valid initial value for the BA.
Nonetheless, in this section we present a reduction from BB to weak BA\footnote{This reduction only works if $n\geq2t+1$.}, which incurs a cost of $O(n(f+1))$ words. Thus, together with an adaptive weak BA with the same complexity, we obtain a synchronous adaptive BB algorithm with a total of $O(n(f+1))$ words and resilience $n=2t+1$. At this point we 
assume that such adaptive weak BA is given as a black box. An implementation for this primitive is presented in Section~\ref{section:aba}.

\begin{algorithm}
    \begin{algorithmic}[1]
    \Statex Initially $v_i,\emph{val},\emph{decision},\emph{ba\_decision}=\bot$ 
    \Statex

    \Statex \textbf{Round 1:}
    \If{$\emph{sender}=p_i$} \label{l:first_line}
        \State send $\lr{v_{sender}}_{sender}$ to all \label{l:initial_s_v}
    \EndIf
    
    \If{received message $\lr{v}_{sender}$ from \emph{sender}}
        \State $v_i\gets \lr{v}_{sender}$ \label{l:rec_init_v}

    \EndIf
    
    \For {$j=1$ to $n$} \label{l:start_for}
            \State $val\gets invokePhase(j,v_i)$
            \If{$val\neq \bot$}
                \State $v_i\gets val$ \label{l:rec_val_v}
            \EndIf

    \EndFor \label{l:end_for}
    
    \State $\emph{ba\_decision} \gets$ weak BA with \emph{BB\_valid} predicate and initial value $v_i$\label{l:decision}
    
    \If{$\emph{ba\_decision}$ is of the form $\lr{v}_{sender}$}
    \State $\emph{decision} \gets v$ \label{l:extract_v_s}
    \Else
    \State $\emph{decision} \gets \bot$ \label{l:eoa}
    \EndIf

     \algstore{bb2}

    \end{algorithmic}
    \caption{BB algorithm: code for process $p_i$, \emph{sender}'s input is $v_{\emph{senedr}}$}
    \label{alg:BB_main}
\end{algorithm}

Our algorithm, presented in Algorithms~\ref{alg:BB_main} and~\ref{alg:BB_phase}, is composed of three parts. 
The first part (lines~\ref{l:first_line}~--~\ref{l:rec_init_v} in Algorithm~\ref{alg:BB_main}) is the first round in which the leader disseminates its value. Processes that receive that value adopt it as their BA initial value (line~\ref{l:rec_init_v}).
The second part (lines~\ref{l:start_for}~--~\ref{l:end_for} in Algorithm~\ref{alg:BB_main} and Algorithm~\ref{alg:BB_phase}) is a ``vetting'' part. It consists of $n$ phases, with a rotating leader.
Leaders initiate phases to learn about the first part's initial value.
Finally, the third part (lines lines~\ref{l:decision}~--~\ref{l:eoa} in Algorithm~\ref{alg:BB_main}) is a weak BA execution.

Deciding upon the weak BA output takes care of the agreement and termination properties. It is left to (1) satisfy the BB validity property and (2) make sure that the preconditions for the weak BA hold, that is, each correct process has a valid input to propose. To achieve these properties, we define the
$\emph{BB\_valid}(v)$ predicate in the following way. 
$\emph{BB\_valid}(v)=\emph{true}$ if and only if $v$ is signed by either the sender or by $t+1$ processes.

Note that if the sender happens to be Byzantine, it is acceptable to decide on any value. However, it is important to make sure that if the sender is correct, then the only valid value is its initial BB input.
Simply setting a value to be valid only if it is signed by the sender would not work, as it allows a faulty sender to cause a scenario in which there are no valid values to agree upon by not sending its value to any process.
Note that we cannot simply fix this by introducing some default valid value: If we were to do so, it would be valid to agree on that value also in the case of a correct sender, violating the BB validity condition.

Our algorithm makes sure that if the sender is correct, the second condition in the $\emph{BB\_valid}$ definition cannot be satisfied, and hence there is only one possible outcome to the BA algorithm. However, if the sender is Byzantine, it is guaranteed that there is some value to decide upon. That is, all correct processes start the weak BA with an initial value that satisfies the predicate.

In the vetting part of the algorithm, we ensure that the above-mentioned conditions hold. Moreover, we do so with a communication complexity that is adaptive to the number of actual process failures.
The core idea is to work in leader-based \emph{phases}. Every phase has a unique leader and is composed of a constant number of leader-to-all and all-to-leader synchronous rounds. Every phase is initiated by a leader-to-all message.
If the leader decides not to send the initial message then no messages will be sent by correct processes in this phase and we say that this phase is \emph{silent}, and otherwise, it is \emph{non-silent}. In our algorithm, a phase is non-silent if the phase's leader did not choose an initial value for the BA prior to that phase.


In every phase, each process $p_i$ starts the phase with some initial value $v_i$ and if the phase is non-silent it returns some value. The requirements from the phase are: (1) If the phase's leader is correct and the phase is non-silent, then all correct processes return a valid value. (2) All correct processes return either $\bot$ or a valid $v$. And (3) if the sender is correct, then no correct process returns a value signed by $t+1$ processes.

Upon a non-silent phase, the leader starts by asking all processes for help by sending a $\helpreq$ message (line~\ref{l:help_bb_2}). A correct process that receives a help request message answers the leader.
If it has set a BA initial value, it sends it to the leader at line~\ref{l:reply_v}, and otherwise, it sends a signed $\idk$ (i don\textquotesingle t know) message at line~\ref{l:reply_idk}.
If the leader receives a value signed by the designated sender it broadcasts it (line~\ref{l:bc_v}).
Otherwise, if it receives $t+1$ $\idk$ messages, it uses a threshold signature scheme to create an $\idk$ quorum certificate and broadcasts it (line~\ref{l:bc_idk}).
A process that receives from the leader a value signed by either the sender or any $t+1$ processes returns it. Otherwise, it returns $\bot$.

At the end of each non-silent phase, a correct process that returns a $v\neq\bot$ from the phase, updates its local $v_i$ accordingly at line~\ref{l:rec_val_v}. This value at the end of the $n^{th}$ phase is the input for the weak BA algorithm.
Since we execute $n$ phases, all correct processes set valid values by the end of all phases.
This is because once there is a correct process that did not set a value it initiates its phase and then all correct processes return with a valid value.
At this point, all processes execute the weak BA and decide upon its output (line~\ref{l:decision}).

\begin{algorithm}
    \begin{algorithmic}[1]
  \algrestore{bb2}
    \State \emph{leader}$\gets p_{j\mod n}$
    \Statex \textbf{Round 1:}
    \If{$\emph{leader}=p_i$ and $v_i=\bot$}
        \State broadcast the message $\lr{\helpreq,j}_\emph{leader}$ \label{l:help_bb_2}
    \EndIf

    \Statex \textbf{Round 2:}
    \If{received $\lr{\helpreq,j}_\emph{leader}$}
    
        \If{$v_i\ne \bot$}
        \State send $\lr{v_i,j}$ to \emph{leader} \label{l:reply_v}
    \Else 
        \State send $\lr{\idk,j}_{p_i}$ to \emph{leader} \label{l:reply_idk}
    \EndIf
    \EndIf

    \Statex \textbf{Round 3:}
    \If{$\emph{leader}=p_i$}
    
    \If{received $\lr{v',j}$ s.t. $v'=\lr{v}_{sender}$} \label{l:r_s_v}

        \State broadcast the message $\lr{\lr{v}_{sender},j}$ \label{l:bc_v}

    \ElsIf{received $t+1$ unique signatures $\lr{\idk,j}_{p'}$}
        \State batch these messages into $QC_{\idk}$ using a $(t+1,n)$-threshold signature scheme
        \State broadcast the message $\lr{QC_{\idk},j}$ \label{l:bc_idk}
    \EndIf

    \EndIf

    \If{received $\lr{v,j}$ from \emph{leader} and \emph{BB\_valid}$(v)=$\emph{true}}
        \State return $v$ \label{l:return_valid}
    \Else
        \State return $\bot$ \label{l:return_bottom}
    \EndIf

    \end{algorithmic}
    \caption{$invokePhase(j,v_i)$: code for process $p_i$}
    \label{alg:BB_phase}
\end{algorithm}

A formal correctness proof of Algorithms~\ref{alg:BB_main} and~\ref{alg:BB_phase} appears in Appendix~\ref{bb_correct}, proving the following theorem:

\begin{theorem}
Algorithm~\ref{alg:BB_main} solves BB.
\end{theorem}



\subsection{Complexity}

We prove that the complexity of Algorithms~\ref{alg:BB_main} and~\ref{alg:BB_phase} is $O(n(f+1))$.

Each non-silent phase is composed of a constant number of all--to--leader and leader--to--all rounds and thanks to the use of threshold signatures, all messages sent have a size of one word. Thus, each phase incurs $O(n)$ words. In total, there are potentially $n$ phases.
However, we prove in Appendix~\ref{bb_correct}
that after the first non-silent phase by a correct leader, all following phases with correct leaders are silent.
Thus, the number of non-silent phases is linear in $f$.
We conclude that all phases in lines~\ref{l:start_for}~--~\ref{l:end_for} use $O(n(f+1))$ words. The complexity of the weak BA black box is also
$O(n(f+1))$ (as we will show in the next section), resulting in a total of $O(n(f+1))$ words.

\section{Adaptive Weak BA}
\label{section:aba}

In this section, we present a synchronous adaptive weak BA algorithm with resilience $n=2t+1$.
This algorithm is the missing link for the adaptive BB presented in the previous section. 
Once again, we use the concept of phases and exploit the pattern of possible \emph{silent} phases. In this algorithm, the phases are slightly different and the decision to start a phase as a leader depends on whether or not the leader has reached a decision in previous phases.

Unlike the BB problem, in BA every process begins the algorithm with its own input value.
Communication-efficient solutions to this problem usually employ threshold signatures schemes~\cite{Abraham2018HotStuffTL,spiegelman2021search}.
This technique is widely used in asynchronous and eventually synchronous protocols, with resilience $n=3t+1$. In these contexts, one can use a scheme of $(n-t)$-out-of-$n$ signatures and benefit from the fact that any two such quorum certificates intersect by at least $t+1$ processes, and therefore at least one correct process.

Unfortunately, when trying to apply the same technique to a system with resilience $n=2t+1$, it fails. A correct process might be unable to obtain $2t+1$ unique signatures on any value as Byzantine processes might not sign it. On the other hand, a quorum certificate with only $t+1$ unique signatures is not very useful as it does not guarantee the desired intersection property.

Our first key observation is that the intersection property can be achieved as long as we have $\ceil[\big]{\frac{n+t+1}{2}}$ unique signatures.
If we obtain this number of signatures out of $n=2t+1$, safety is preserved in the sense that conflicting certificates cannot be formed by a malicious adversary. Of course, there are runs in which we cannot reach that threshold since $\ceil[\big]{\frac{n+t+1}{2}}>n-t$ (e.g., if $t$ processes crash immediately as the run begins).
But in this case, $f\ge \frac{t}{2}$, and $O(f)$ becomes asymptotically $O(t)$. Hence, we can use a fallback algorithm with $O(nt)$ communication complexity. 

As we assume that $t\in \Theta(n)$, we can use Momose and Ren's synchronous algorithm that has $O(n^2)$ communication complexity~\cite{momose2020optimal} for the fallback. We denote that algorithm $\mathcal{A}_\emph{fallback}$. Note that their algorithm is ``stronger'' than our proposed algorithm as it provides strong unanimity for validity (i.e., it solves strong BA).
We can use their solution by checking the validity of $\mathcal{A}_\emph{fallback}$'s output according to the predicate. If it is valid, this is the decision value, and otherwise a default valid value is decided.
Equipped with these insights, we next present our algorithm.

During the phases part of the protocol, a correct process must commit a value before reaching a decision. When it has certainty about a value it updates that value in a \emph{commit} variable, along a \emph{commit\_proof} of this commitment (a quorum certificate, signed by sufficiently many processes) and a \emph{commit\_level} indicating the latest phase of a valid commitment it heard of. Once a correct process commits to a certain value it can only commit to a value for which it heard a valid commitment proof in a later phase during the run. That is, it may decide on a value for which it did not send a commit message. Moreover, it may even decide on a value it did not commit at all. For example, if it reaches the fallback and no correct process has decided. Once a correct process reaches a decision it updates it in its local \emph{decision} variable as well as a matching quorum certificate in \emph{decide\_proof} variable.

\algdef{SE}[SUBALG]{Indent}{EndIndent}{}{\algorithmicend\ }%
\algtext*{Indent}
\algtext*{EndIndent}

\newcommand{\finalized}{\mathsf{finalized}\xspace}

\algdef{SE}[UPON]{Upon}{EndUpon}[1]{\textbf{upon
}\ #1\ \algorithmicdo}{\algorithmicend\ \textbf{}}%
\algtext*{EndUpon}

\begin{algorithm}
    \begin{algorithmic}[1]
    
    \Statex Initially $\emph{decision}=\emph{undecided}, \emph{bu\_decision}=v_i, \emph{fallback\_start}\gets \infty$
    \Statex $\emph{decide\_proof}, \emph{commit}, \emph{commit\_proof},\emph{bu\_proof},\emph{fallback\_val},\emph{phase\_decision}=\bot$ 
    \Statex $\emph{commit\_level}\gets 0$
    \Statex

    \For {$j=1$ to $t+1$}

            \State $\emph{phase\_decision}, \emph{decide\_proof}, \emph{commit}, \emph{commit\_proof},\emph{commit\_level}\gets
            $
            \Statex 
            \hskip1.5em $\emph{invokePhase}(j,v_i,decision, commit, commit\_proof,commit\_level)$ 
            
            \If{$\emph{decision}=\emph{undecided} $ and $\emph{phase\_decision}\neq \emph{undecided}$}
                \State $\emph{decision} \gets \emph{phase\_decision}$ \label{l:inv_phase}
            \EndIf

    \EndFor \label{line:end_for}

    \Statex \textbf{Round 1:}

    \If{$decision=\emph{undecided}$} \label{l:end_phases}
        \State broadcast $\lr{\helpre}_{p_i}$\label{l:help}
    \EndIf
    
    \Statex \textbf{Round 2:}
    \If{received $\lr{\helpre}_{p'}$ message and $\emph{decision}\neq \emph{undecided}$}\label{l:r_help}
        \State send $\lr{\help,\emph{decision},\emph{decide\_proof}}_{p_i}$ to $p'$\label{l:a_help}
    \EndIf
    
    \If{received $t+1$ messages of $\lr{\helpre}_{p'}$ from different processes}
    
        \State batch these messages into $QC_{\fb}(v)$ using a $(t+1,n)$-threshold signature scheme
    
        \State broadcast the message $\lr{\fb,QC_{\fb},\emph{decision},proof}_{p_i}$  \label{l:bc_dec1}
        \State $\emph{fallback\_start}\gets \emph{now}+2\delta$

    \EndIf
    
    \Statex \textbf{Round 3:}

    \If {received $\lr{\help,v,\emph{decide\_proof}}_{p'}$ with valid $v$ and \emph{decide\_proof} for $v$ and $\emph{decision}=\emph{undecided}$}
        \State $\emph{decision}\gets v$\label{l:got_help}
    \EndIf

     \State $\emph{bu\_decision}\gets decision$ 

    \While{$\emph{fallback\_start}>\emph{now}$}
        \If {received valid $\lr{\fb,QC_{\fb},v,proof_{p'}}_{p'}$} \label{l:safety_wind}
            \If{$\emph{decision}=\emph{undecided}$ and $proof_{p'}\neq \bot$ is a valid proof for a valid $v$}
        \State $\emph{bu\_decision}\gets v$ \label{l:got_fb}
        \State $bu\_proof\gets proof_{p'}$
        \EndIf

        \If{$\emph{fallback\_start}= \infty$}
            \State broadcast the message $\lr{\fb,QC_{\fb},\emph{bu\_decision},bu\_proof}_{p_i}$ \label{l:bc_fb}
            \State $\emph{fallback\_start}\gets \emph{now}+2\delta$

        \EndIf

        \EndIf
        
    \EndWhile
    
    \State $\emph{fallback\_val}\gets \mathcal{A}_{\emph{fallback}}$ with $\delta'=2\delta$ and initial value \emph{bu\_decision} \label{l:fb} 
    
    \If{$\emph{decision}=\emph{undecided}$}
        \If{\emph{fallback\_val} is valid} \label{l:pos_r1}
            \State $\emph{decision}\gets \emph{fallback\_val}$ \label{l:final_dec}
        \Else
            \State $\emph{decision}\gets \bot$\label{l:ret_def}
        \EndIf \label{l:pos_r2}
        
    \EndIf

     \algstore{part1}

    \end{algorithmic}
    \caption{weak BA algorithm: code for process $p_i$ with initial value $v_i$}
    \label{alg:BA_main}
\end{algorithm}

\begin{algorithm}
    \begin{algorithmic}[1]
   \algrestore{part1}
    \State \emph{leader}$\gets p_{j\mod n}$

    \Statex \textbf{Round 1:}
    \If{$\emph{leader}=p_i$ and $\emph{decision}=\bot$}
        \State broadcast the message $\lr{\propose, v_i,j}_{\emph{leader}}$ \label{l:first_bc}
    \EndIf

    \Statex \textbf{Round 2:}
    \If{received $\lr{\propose, v,j}_{\emph{leader}}$ with a valid $v$ for the first time and $commit=\bot$}\label{l:valid_v}
        \State send $\lr{\vote, v,j}_{p_i}$ to $\emph{leader}$ \label{l:sign_vote}
    \ElsIf{received $\lr{\propose, v,j}_{\emph{leader}}$ and $commit\neq \bot$}
        \State send $\lr{\commit,\emph{commit,commit\_proof},\emph{commit\_level},j}_{p_i}$ to $\emph{leader}$ \label{l:first_commit}
    \EndIf

    \Statex \textbf{Round 3:}
    \If{$\emph{leader}=p_i$}
        \If{received $\lr{\commit, w, QC_{\commit}(w),level_{\commit}, j}_{p'}$}
            \State broadcast the message $\lr{\commit,w, QC_{\commit}(w),level_{\commit}, j}_{\emph{leader}}$  according to the maximal $level_{\commit}$ received

        \ElsIf{received $\ceil[\big]{\frac{n+t+1}{2}}$ messages of $\lr{\vote, v, j}_{p'}$} \label{l:thresh_votes}
            \State batch these messages into $QC_{\commit}(v)$ using a $(\ceil[\big]{\frac{n+t+1}{2}},n)$-threshold signature scheme
            \State broadcast the message $\lr{\commit,v, QC_{\commit}(v), j, j}_{\emph{leader}}$
        \EndIf
    \EndIf
    
    \Statex \textbf{Round 4:}
    \If{received $\lr{\commit,v, QC_{\commit}(v), level_{\commit}, j}_{\emph{leader}}$ and $level_{\commit}\geq \emph{commit\_level}$ and $level_{\commit}$ is valid according to $QC_{\commit}(v)$}\label{line:checklevel}
    
    \State send $\lr{\decide, v, j}_{p_i}$ to \emph{leader}\label{l:sign_dec}

        \State $\emph{commit}\gets v$
        \State \emph{commit\_proof}$\gets QC_{\commit}(v)$
        \State \emph{commit\_level}$\gets level_{\commit}$ \label{line:rank}
        \EndIf


    \Statex \textbf{Round 5:}
    \If{$\emph{leader}=p_i$}
        \If{received $\ceil[\big]{\frac{n+t+1}{2}}$ messages of $\lr{\decide, v, j}_{p'}$}
            \State batch these messages into $QC_{\finalized}(v)$ using a $(\ceil[\big]{\frac{n+t+1}{2}},n)$-threshold signature scheme

            \State broadcast the message $\lr{\finalized,v, QC_{\finalized}(v), j}_{\emph{leader}}$

    \EndIf
    \EndIf

    \If{received $\lr{\finalized,v, QC_{\finalized}(v), j}_{\emph{leader}}$}

        \State $\emph{decision}\gets v$ \label{l:set_decision}
        \State \emph{decide\_proof}$\gets QC_{\finalized}(v)$

    \EndIf

    \State return $(\emph{decision}, \emph{decide\_proof}, \emph{commit}, \emph{commit\_proof},\emph{commit\_level})$\label{l:phase_return}

    \end{algorithmic}
    \caption{$invokePhase(j, v_i, decision, \emph{decide\_proof}, commit, commit\_proof,commit\_level)$: code for process $p_i$}
    \label{alg:invoke_phase}
\end{algorithm}

\textbf{A single phase}
The code for a single phase is given in Algorithm~\ref{alg:invoke_phase}. 
Each process $p_i$ starts a phase with its initial value $v_i$ and information about possible previous commits ($\emph{commit, commit\_proof, commit\_level}$) and decisions ($\emph{decision, decide\_proof}$). Correct processes return with updated information about commits and decisions that were made in that phase (or prior to that). The guarantees of the phases are:
(1) Every \emph{decision} updated during a phase is valid; (2) All decisions updated by correct processes are the same and there exists at most one valid \emph{decide\_proof} in the system; and (3) If the phase's leader is correct, the phase is non-silent, and $n-f>\ceil[\big]{\frac{n+t+1}{2}}$, then all correct processes return with the same valid decision.

Every non-silent phase starts with the leader broadcasting a $\propose$ message with its value in line~\ref{l:first_bc}. Upon receiving this message, correct processes either vote for this value by signing it (line~\ref{l:sign_vote}) or answer with a value that was previously committed as well as its commit quorum certificate and level (line~\ref{l:first_commit}). If the leader receives a committed value it simply broadcasts it. Otherwise, if it manages to achieve the required $\ceil[\big]{\frac{n+t+1}{2}}$ threshold of signatures, it can form a quorum certificate committing its proposed value (line~\ref{l:thresh_votes}).

Note that at this point the committed value is not ``safe enough'' to be decided by correct processes. Byzantine leaders may cause correct processes to participate in forming a commit certificate for more than one value. As correct processes that have decided do not initiate phases, they might never communicate without going through Byzantine leaders. Thus, we need another level of certainty, in the form of the finalize certificate (to be stored in \emph{decide\_proof}). Using the commit levels, we maintain the invariant that if a correct process receives a valid $\finalize$ certificate, then no $\finalize$ certificate on another value can be formed.

Thus, after a correct process learns about a committed certificate with a level higher than the previous commit, it sends a matching $\decide$ message to the leader (line~\ref{l:sign_dec}) and updates the commit information accordingly.
If the leader receives the necessary threshold of $\decide$ messages, it forms a $\finalize$ quorum certificate. Every process that receives such a certificate can safely return the certificate's value as its decision.

\textbf{Main algorithm}
The BA algorithm is given in Algorithm~\ref{alg:BA_main}, using the phase algorithm as a building block.
In our algorithm, all correct processes eventually decide by updating their \emph{decision} variable. However, they do not halt.
In our BA algorithm, we start by executing $n$ phases with a rotating leader, ensuring that every correct process has a chance to reach a decision before executing the fallback algorithm.
After the phases end there are several possibilities. First, if there are at most $\frac{n-t-1}{2}$ Byzantine processes, all correct processes must have decided. If there are more Byzantine processes, it may be the case that some correct processes decided and others did not.
This could happen, for example, if a Byzantine leader causes the single correct leader to decide and not initiate its phase. By the phase guarantees, we know that all correct processes that decide by this point, decide the same valid value.

To address the case where not all correct processes decided, we have processes that have not decided ask for help from all other processes (line~\ref{l:help}). If a correct process has decided and receives a $\helpre$ message, it answers with a help message including the decision value along with its proof at line~\ref{l:a_help}. Note that in this round, the number of messages sent by correct processes is linear in the number of help requests. Specifically, if only Byzantine processes send $\helpre$ messages, the number of answers is $O(nf)$ and independent of $t$.

We note that if $t+1$ help requests are sent, then at least one of them is sent by a correct process that did not manage to form quorum certificates when it served as leader.
Thus, in this case, $f\in \Theta(t)$, and we can execute the fallback algorithm. To make sure that all correct processes participate in the fallback algorithm, a $\fb$ certificate with $t+1$ signature is formed.

We now encounter a new challenge.
We must have all correct processes start a synchronous fallback algorithm at the same time. However, an adversary can form the $\fb$ certificate and deal it to only some correct processes. 
This scenario can happen, for example, if less than $t+1$ $\helpre$ messages are sent, and the adversary adds $t$ $\helpre$ signatures of its own.
We thus require a correct process that receives a $\fb$ certificate to broadcast it (line~\ref{l:bc_fb}).
This ensures that whenever one correct process runs the fallback algorithm, all of them do, but may still cause different correct processes to start the fallback at different times.
Nevertheless, we know that the starting time difference is at most the $\delta$ it takes the message to arrive.
We therefore run the fallback algorithm with $\delta'=2\delta$, ensuring that all correct processes enter a fallback round before any of them exits from it.

Another subtle point is making sure that the fallback algorithm does not output a decision value that contradicts previous decisions made by correct processes.
For that reason we add another $2\delta$ safety window between getting notified about a fallback and initiating it. Correct processes that broadcast the $\fb$ certificate attach their decision value and a proof (if exists). In the $2\delta$ safety window, processes that learn about a decision value in the system adopt it as the initial value for the fallback algorithm (line~\ref{l:safety_wind}).
Recall that $\mathcal{A}_\emph{fallback}$ is a strong BA protocol. If a correct process decides $v$ prior to the fallback algorithm, all other correct processes learn about 
$v$ during the safety window. Then, by strong unanimity, they all decide $v$.

Note that if the decision returned from $\mathcal{A}_\emph{fallback}$ is not valid then it must be that strong unanimity preconditions are not satisfied (since correct processes always have valid inputs) and a default value is returned.
Furthermore, whenever the strong unanimity precondition is not satisfied, it follows that not all correct processes propose the same value. As a result, there must exist more than one valid value in the run (the different correct proposals). 
And the $\bot$ default value is a valid weak BA output.


%



A formal correctness proof of Algorithms~\ref{alg:BA_main} and~\ref{alg:invoke_phase} appears in Appendix~\ref{weak_ba_correct}, proving the following theorem:

\begin{theorem}
Algorithm~\ref{alg:BA_main} solves weak BA.
\end{theorem}

\subsection{Complexity}

We show that if $f<\frac{n-t-1}{2}$, correct processes never perform the fallback algorithm. 

\begin{lemma}
If $f<\frac{n-t-1}{2}$, correct processes never perform the fallback algorithm.
\end{lemma}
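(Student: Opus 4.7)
I would argue by contradiction. The fallback is performed only after a $QC_{\fb}$ threshold signature is assembled, and such a certificate bundles $t+1$ distinct $\helpreq$ signatures. Since at most $f$ of those can be forged by Byzantine processes, at least $t+1-f$ of the signatures must come from correct processes that entered the post-phases help round still undecided. My plan is to contradict this by showing that under $f<\frac{n-t-1}{2}$, every correct process decides during the $t+1$ invocations of the phase subroutine.

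The key arithmetic step is that $f<\frac{n-t-1}{2}$ rearranges to $n-f\geq \lceil\frac{n+t+1}{2}\rceil$, so the correct processes alone saturate the commit and finalize quorum thresholds used inside a phase. Let $p_L$ be the first correct leader in the schedule $p_1,\ldots,p_{t+1}$ (one exists, since at most $f\leq t$ of those leaders are Byzantine) and let $j^\ast$ be its phase. If $p_L$'s phase is non-silent, $p_L$ broadcasts a $\propose$ message carrying its valid input; every correct process replies in round 2 with either a $\vote$ (if its $commit$ is $\bot$) or with $\lr{\commit, v', QC_{\commit}(v')}$ (if already committed). Using the arithmetic above, $p_L$ builds a commit certificate in round 3, collects $\lceil\frac{n+t+1}{2}\rceil$ $\decide$ messages in round 4, and broadcasts a $\finalize$ certificate in round 5; since $p_L$ is correct, every correct process receives it and sets its local decision, exactly as in phase guarantee~(3) with the quorum precondition now met.

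The main obstacle is the silent sub-case, where $p_L$ has already decided in some Byzantine-led phase $j<j^\ast$. I would handle this with two safety facts: (i) the $QC_{\finalize}(v)$ responsible for $p_L$'s decision requires at least $\lceil\frac{n+t+1}{2}\rceil - f\geq t+1$ correct signatures on $\lr{\decide, v}$, so those correct processes have set $commit = v$ and cached $QC_{\commit}(v)$; and (ii) quorum intersection at threshold $\lceil\frac{n+t+1}{2}\rceil$ prevents any $\commit$ or $\finalize$ certificate for a value $v'\neq v$ from ever forming. Those committed correct processes then reply to every later $\propose$ with $\lr{\commit, v, QC_{\commit}(v)}$, forcing any non-silent subsequent leader to broadcast the commit for $v$ and, via the round 4--5 exchange, to produce a fresh $\finalize(v)$ certificate that reaches every correct process. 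Either way, every correct process ends the loop with its decision set to $v$. Consequently no correct process emits $\helpreq$, at most $f<\frac{n-t-1}{2}\leq\lfloor t/2\rfloor<t+1$ $\helpreq$ signatures can exist in the run, and $QC_{\fb}$ can never be assembled -- contradicting the assumption.
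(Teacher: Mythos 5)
Your overall strategy is the same as the paper's: show that when $f<\frac{n-t-1}{2}$ every correct process decides during the $t+1$ phases, so fewer than $t+1$ distinct $\helpre$ signatures can ever exist and $QC_{\fb}$ cannot be assembled. The paper gets there in one step by invoking Lemma~\ref{lemma:early_decision} for the (at least one) correct leader among $p_1,\dots,p_{t+1}$; your non-silent case is essentially a re-proof of that lemma, and your quorum arithmetic ($n-f\geq\ceil[\big]{\frac{n+t+1}{2}}$ and $\ceil[\big]{\frac{n+t+1}{2}}-f\geq t+1$) is correct. You also deserve credit for noticing that the proof of Lemma~\ref{lemma:early_decision} only covers a leader that actually broadcasts, so a correct leader that has already been silenced needs separate treatment.

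However, your resolution of that silent sub-case has a gap. You argue that the $\geq t+1$ correct processes holding $QC_{\commit}(v)$ will force ``any non-silent subsequent leader'' to re-finalize $v$, and conclude that ``either way'' every correct process decides. But nothing guarantees that a subsequent non-silent phase exists. A Byzantine $p_1$ can run phase $1$, gather $\ceil[\big]{\frac{n+t+1}{2}}$ votes and $\decide$ messages from correct processes (possible since $n-f\geq\ceil[\big]{\frac{n+t+1}{2}}$), and deliver the resulting $\finalize$ certificate only to the future correct leaders $p_2,\dots,p_{t+1}$. All of those then decide and keep their own phases silent, while the remaining correct processes --- up to $t$ of them --- have merely committed and never decided. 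Your claimed contradiction (``no correct process emits $\helpre$'') then fails: those undecided correct processes broadcast $\helpre$ at line~\ref{l:help}, and together with even one Byzantine signature that already yields $t+1$ distinct signatures, enough to assemble $QC_{\fb}$. So the silent sub-case is not closed by pointing to later leaders; you would need to argue either that silencing a correct leader forces all correct processes to decide, or give a different accounting of how many $\helpre$ signatures can exist. Note that the paper's own one-line appeal to Lemma~\ref{lemma:early_decision} quietly skates over exactly this scenario, so the difficulty you uncovered is real --- but your patch does not yet seal it.
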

\begin{proof}

In Appendix~\ref{weak_ba_correct} we prove that if a correct process is the leader of a non-silent phase and $f<\frac{n-t-1}{2}$, then all correct processes return the same valid decision.
Since Algorithm~\ref{alg:BA_main} is composed of $n$ phases, every correct process has a chance to invoke its phase and all correct processes decide by line~\ref{line:end_for}.
Assume by way of contradiction that there exists a correct process that invokes the fallback algorithm. By the code, it has received a $\fb$ certificate. However, such certificate can only be formed by $t+1$ unique $\helpre$ signatures, meaning that at least one correct process sent a $\helpre$ message. But this is impossible if all correct processes decide by line~\ref{line:end_for}.

\end{proof}

Each phase is composed of a constant number of all--to--leader and leader--to--all rounds. Thus, it incurs $O(n)$ words. Potentially, there are $n$ phases. However, a lemma in Appendix~\ref{weak_ba_correct} proves that once a correct leader invokes $invokePhase()$ and the number of actual failures is $f<\frac{n-t-1}{2}$,  all correct processes decide by the end of that phase. Since correct leaders that had already decided do not invoke their phases (their phases are silent), the number of invoked phases depends on $f$ itself.
Thus, all phases combined send $O(n(f+1))$ words.

After $n$ invokePhase invocations end, help request messages are sent only by correct processes that did not decide. By the above-mentioned lemma, it happens only if $f>\frac{n-t-1}{2}$.
In this case, $f=\Theta(n)$ and since $t=\Theta(n)$ it holds that $O(nf)=O(n^2)$.
Correct processes that decide by this point answer directly to whoever sent them help requests, without affecting the asymptotic complexity.
If some correct process receives a $\fb$ certificate, another all-to-all round is added, keeping the complexity $O(n^2)$. All other communication costs are incurred in the fallback algorithm, whose complexity is also $O(n^2)$.

\section{strong BA: the failure free case}
\label{section:strong}

Recall that the optimal resilience for strong BA is $n=2t+1$.
In this section we present a binary strong BA protocol that has communication complexity of $O(n)$ in the failure free case. Otherwise, it has complexity $O(n^2)$.
The question of whether an adaptive protocol with $O(n(f+1))$ complexity can be designed for strong BA with optimal resilience remains open.

\algdef{SE}[SUBALG]{Indent}{EndIndent}{}{\algorithmicend\ }%
\algtext*{Indent}
\algtext*{EndIndent}

\algdef{SE}[UPON]{Upon}{EndUpon}[1]{\textbf{upon
}\ #1\ \algorithmicdo}{\algorithmicend\ \textbf{}}%
\algtext*{EndUpon}

\begin{algorithm}

    \begin{algorithmic}[1]
    
        \Statex Initially $\emph{decision}, \emph{proof},\emph{bu\_decision},\emph{bu\_proof},\emph{fallback\_val}=\bot$ 
        \Statex $\emph{fallback\_start}\gets \infty$

    \Statex

    \State $\emph{leader}\gets p_1$  

    \Statex \textbf{Round 1:}
    \State send $\lr{v_i}_{p_i}$ to \emph{leader} \label{ll:send_v}
    
    \Statex \textbf{Round 2:}
    \If{$\emph{leader}=p_i$}
        \If{received $t+1$ messages of $\lr{v}_{p'}$ for some $v$} 
            \State batch these messages into $QC_{\propose}(v)$ using a $(t+1,n)$-threshold signature scheme
            \State broadcast the message $\lr{\propose,v, QC_{\propose}(v)}_{\emph{leader}}$ \label{ll:bc_p_c}
        \EndIf
    \EndIf
    
    \Statex \textbf{Round 3:}
    \If {received valid $\lr{\propose,v, QC_{\propose}(v)}_{\emph{leader}}$}
        \State send $\lr{\decide,v}_{p_i}$ to \emph{leader} \label{ll:s_d}
    \EndIf
    
    \Statex \textbf{Round 4:}
    \If{$\emph{leader}=p_i$}
        \If{received $n$ messages of $\lr{\decide,v}_{p'}$} 
            \State batch these messages into $QC_{\decide}(v)$ using a $(n,n)$-threshold signature scheme
            \State broadcast the message $\lr{\decide,v, QC_{\decide}(v)}_{\emph{leader}}$ \label{ll:bc_d_c}
        \EndIf
    \EndIf
    
    \Statex \textbf{Round 5:}
    \If {received valid $\lr{\decide,v, QC_{\decide}(v)}_{\emph{leader}}$ and $\emph{decision}=\bot$}
        \State \emph{decision}$\gets v$ \label{ll:dec}
        \State \emph{proof}$\gets QC_{\decide}(v)$
    \Else
        \State broadcast the message $\lr{\fb, \bot, \bot}_{p_i}$  \label{ll:fall}

            \State $\emph{fallback\_start}\gets \emph{now}+2\delta$

    \EndIf
 
     \State $\emph{bu\_decision}\gets decision$

    \While{$\emph{fallback\_start}>\emph{now}$}
        \If {received $\lr{\fb,v,proof_{p'}}_{p'}$}
            \If{$\emph{decision}=\bot$ and $proof_{p'}\neq \bot$ is a valid proof for a valid $v$}
        \State $\emph{bu\_decision}\gets v$ \label{ll:bu}
        \State $bu\_proof\gets proof_{p'}$
        \EndIf
        
        \If{$\emph{fallback\_start}= \infty$}
            \State broadcast the message $\lr{\fb,\emph{bu\_decision},bu\_proof}_{p_i}$ 
            \State $\emph{fallback\_start}\gets     \emph{now}+2\delta$

        \EndIf

        \EndIf
        
    \EndWhile

    \State  $\emph{fallback\_val}\gets \mathcal{A}_{\emph{fallback}}$ with $\delta'=2\delta$ and initial value \emph{bu\_decision}  \label{ll:fb}
    
    \If{$\emph{decision}=\bot$}
        \State $\emph{decision}\gets\emph{fallback\_val}$ \label{ll:dec_fb_v}
    \EndIf

    \end{algorithmic}

    \caption{strong BA algorithm: code for process $p_i$ with initial value $v_i$}
    \label{alg:FF_strong_BA}
\end{algorithm}

In the algorithm, presented in Algorithm~\ref{alg:FF_strong_BA}, a single leader first collects all initial values. Since we solve binary agreement, in the failure-free case there must be a value proposed by $t+1$ different processes. Thus, the leader can use a threshold signature scheme to aggregate a quorum certificate on this proposed value.

As a second step, the leader sends this certificate to all processes and attempts to collect $n$ different signatures on the value. If it succeeds, it broadcasts it. Every process that receives a signed-by-all certificate can safely decide upon its value.
If a correct process does not decide, it broadcasts a fallback message. Every process that hears such a message, echoes it at most once, and execute $\mathcal{A}_{\emph{fallback}}$ after $2\delta$ time with $2\delta$-long rounds, as in Section~\ref{section:aba}.
In Appendix~\ref{strong_ba_correct} we prove the correctness of Algorithm~\ref{alg:FF_strong_BA}. That is, we prove the following theorem:

\begin{theorem}
Algorithm~\ref{alg:FF_strong_BA} solves binary strong BA.
\end{theorem}

\subsection{Complexity}
We show that if the run is failure-free, correct processes never perform the fallback algorithm. 

\begin{lemma}
\label{lem:all_correct}
If $f=0$, correct processes never perform the fallback algorithm.
\end{lemma}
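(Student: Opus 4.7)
My plan is to simply trace the execution in the failure-free case and show that every correct process reaches a decision by round 5 along the ``happy path,'' so that no one ever broadcasts a $\fb$ message and hence no one ever sets $\emph{fallback\_start}$ to a finite value, which is the only way to exit the waiting loop that guards the call to $\mathcal{A}_{\emph{fallback}}$ on line~\ref{ll:fb}.

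First, since $f=0$ the designated leader $p_1$ is correct and all $n$ processes faithfully follow the protocol. In round~1 each of the $n$ processes sends its signed input to $p_1$ on line~\ref{ll:send_v}, so $p_1$ receives all $n$ signed values. Because the protocol is binary, by pigeonhole at least $\ceil{n/2}\geq t+1$ of the received values coincide on some $v\in\{0,1\}$, so the leader can aggregate these into a valid $QC_{\propose}(v)$ and broadcast $\lr{\propose,v,QC_{\propose}(v)}_{\emph{leader}}$ in round~2 on line~\ref{ll:bc_p_c}.

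In round~3 every (correct) process receives this valid propose certificate and replies with $\lr{\decide,v}_{p_i}$ on line~\ref{ll:s_d}. In round~4 the leader therefore collects exactly $n$ matching $\decide$ messages, forms $QC_{\decide}(v)$, and broadcasts $\lr{\decide,v,QC_{\decide}(v)}_{\emph{leader}}$ on line~\ref{ll:bc_d_c}. In round~5 every correct process receives this valid $\decide$ certificate, so the ``if'' branch is taken and $\emph{decision}\gets v$ is executed on line~\ref{ll:dec}; crucially, the ``else'' branch containing the broadcast of $\lr{\fb,\bot,\bot}_{p_i}$ on line~\ref{ll:fall} is \emph{not} taken, and in particular no correct process sets its own $\emph{fallback\_start}$ to a finite value in this step.

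Consequently, no correct process ever sends a $\fb$ message, so no correct process ever receives one either; the guard of the inner \textbf{if} inside the waiting loop never fires, and therefore $\emph{fallback\_start}$ remains $\infty$ for every correct process throughout the run. The loop condition $\emph{fallback\_start}>\emph{now}$ thus remains satisfied forever, so no correct process proceeds to line~\ref{ll:fb} and the fallback protocol is never invoked. The only subtle point is the semantic interpretation of the waiting loop when $\emph{fallback\_start}=\infty$, which I would address by noting that this is precisely the intended behavior: a process that has already decided on line~\ref{ll:dec} has nothing more to do, and the loop serves only to delay possible fallback participants until the $2\delta$ safety window expires.
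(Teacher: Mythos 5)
Your proof is correct and follows essentially the same route as the paper's: trace the failure-free happy path through rounds 1--5, use the binary-domain pigeonhole argument to get $t+1$ matching signatures for the propose certificate, and conclude that every correct process decides at line~\ref{ll:dec} so no $\fb$ message is ever sent. The extra discussion of why $\emph{fallback\_start}$ stays at $\infty$ is a harmless elaboration of the paper's final sentence.
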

\begin{proof}
If all processes are correct then they all send their initial values to the leader at line~\ref{ll:send_v}. Since values are binary, and there are $n=2t+1$ processes, there must be a value $v$ such that the leader receives $t+1$ unique signatures on $v$. Then, the leader broadcasts a $\propose$ certificate on $v$ (line~\ref{ll:bc_p_c}). Every correct process that receives this certificate replies with a signed $\decide$ message at line~\ref{ll:s_d}. Since all processes are correct, the leader then receives $n$ signatures and then broadcasts a $\decide$ certificate on $v$ (line~\ref{ll:bc_d_c}). All processes then receive this certificate and decide $v$ at line~\ref{ll:dec}. None of them sends a fallback message.
\end{proof}

 By Lemma~\ref{lem:all_correct}, if all processes are correct then they never perform the fallback algorithm, and there are 4 all-to-leader and leader-to-all rounds, with a total of $O(n)$ words. Otherwise, the complexity is the complexity of the fallback algorithm, which is $O(n^2)$.

\section{Conclusions and Future Directions}

We have presented solutions for both Byzantine Broadcast and weak Byzantine Agreement with adaptive communication complexity of $O(n(f+1))$ and resilience $n=2t+1$. 
To construct the weak BA algorithm, we utilized a threshold on the number of signatures such that on one hand, this number is sufficient to ensure a safe algorithm with adaptive communication in case there are not ``many'' Byzantine processes. On the other hand, failing to achieve this threshold indicates that there is a high number of failures, which allows the use of a quadratic fallback algorithm.


This weak BA algorithm is taken as a black box to construct our adaptive BB algorithm.
Here, we carefully choose the predicate for the validity property, to allow us to reduce one problem to the other. Finally, for strong BA we propose a binary solution with optimal resilience. Our solution is linear in $n$ in the practically common failure-free case, and quadratic in any other case. The question of whether a fully adaptive strong BA with optimal resilience exists or not remains open.

While $n=2t+1$ is optimal for strong BA, this is not the case for BB and weak BA, where any $t<n$ can be tolerated\footnote{For weak BA, this stems from the resilience for external validity.}. Thus, another possible future direction is improving the resilience of an adaptive BB or adaptive weak BA to support any $t<n$.
Our weak BA algorithm relies on the current resilience to satisfy that if $f>n-\ceil[\big]{\frac{n+t+1}{2}}$ then $f$ is linear in $t$.
Note that this remains true for any resilience of $n=\alpha t+\beta$, for $\alpha>1,\beta>0$ without compromising the intersection property required for safety.
Should a quadratic solution for weak BA be developed, it could be used to improve the total resilience of our adaptive algorithm (instead of Momose and Ren's algorithm~\cite{momose2020optimal}).


\bibliography{references}

\newpage

\appendix

\section{Adaptive Byzantine Broadcast: Correctness}
\label{bb_correct}

\subsection{Correctness}

We start by proving the phase's requirements.
First, immediately from lines~\ref{l:return_valid}~--~\ref{l:return_bottom} we get that all correct processes return either $\bot$ or a valid $v$. Next, the following lemma shows that in non-silent phases with correct leaders all correct processes return a valid value.

\begin{lemma}
\label{lemma:only_valid}
If a phase is non-silent and its leader is correct, then all correct processes return a valid value.
\end{lemma}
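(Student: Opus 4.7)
The plan is to trace through the behavior of a non-silent phase led by a correct leader $L$, case-split on what $L$ receives after asking for help, and show that in both branches $L$ broadcasts a value that every correct process accepts as satisfying the $\emph{BB\_valid}$ predicate.

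First I would note that since the phase is non-silent and $L$ is correct, $L$ sends a $\helpreq$ message to all processes at the start of the phase (line~\ref{l:help_bb_2}). Synchrony together with the existence of at least $t+1$ correct processes guarantees that $L$ receives at least $t+1$ responses in the next round, each either a $v_i$ the responder has already set (line~\ref{l:reply_v}) or a freshly signed $\idk$ (line~\ref{l:reply_idk}). Then I would case split. If at least one correct response contains a value $v$ signed by the sender, $L$ broadcasts $v$ at line~\ref{l:bc_v}; synchrony delivers it to every correct process within the round, so each correct process sees a sender-signed value, passes the $\emph{BB\_valid}$ check, and returns $v$ at line~\ref{l:return_valid}. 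Otherwise no correct process has a sender-signed value set as $v_i$, so each correct response is either an individual signed $\idk$ or an already-formed $\idk$ quorum certificate; in either situation $L$ can gather at least $t+1$ distinct $\idk$ signatures and broadcast an $\idk$ certificate at line~\ref{l:bc_idk}, which carries $t+1$ signatures and therefore passes the predicate, so every correct process returns it.

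The part I expect to be the main obstacle is the bookkeeping in the second case, specifically handling a mixture of fresh individual $\idk$ signatures and pre-existing $\idk$ certificates forwarded by correct processes (such a mix can arise if a Byzantine leader of an earlier phase delivered a certificate only to a subset of correct processes). The key observation is that every $\idk$ certificate is itself a bundle of $t+1$ signatures on $\idk$, so regardless of how the at least $t+1$ correct responses partition between individual $\idk$ messages and forwarded certificates, $L$ always has at least $t+1$ distinct $\idk$ signatures on hand and can either aggregate a fresh threshold certificate or simply re-broadcast one it received. Once this point is settled, combining the two cases yields that every correct process returns a value satisfying $\emph{BB\_valid}$, which is exactly the lemma's claim.
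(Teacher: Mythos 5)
Your proposal follows the same case split as the paper's proof (leader receives a sender-signed value vs.\ leader falls back to collecting $\idk$ messages), but the ``obstacle'' you flag in the second case is not a mere bookkeeping worry---it is a real gap that the paper's own proof glosses over. The paper asserts that if no correct process holds a sender-signed value, then every correct process answers the $\helpreq$ with an $\idk$ message at line~\ref{l:reply_idk}. That is not what the code does: a correct process whose $v_i$ was set at line~\ref{l:rec_val_v} to an $\idk$ quorum certificate returned by an earlier (Byzantine-led) phase has $v_i\neq\bot$ and therefore replies at line~\ref{l:reply_v} with that certificate, not with a fresh $\lr{\idk,j}$. Since a Byzantine leader can deliver such a certificate to only a subset of the correct processes, a later correct leader with $v_i=\bot$ can end up with fewer than $t+1$ fresh $\idk$ signatures and no sender-signed value, in which case neither branch at line~\ref{l:r_s_v}/line~\ref{l:bc_idk} fires and all correct processes return $\bot$ at line~\ref{l:return_bottom}---contradicting the lemma as the code is literally written. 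Your resolution---the leader re-broadcasts a received $\idk$ certificate, which is signed by $t+1$ processes and hence passes $\emph{BB\_valid}$ at every receiver---is the right fix and completes the case analysis (some reply is sender-signed; else some reply is a certificate, re-broadcast it; else all $\geq t+1$ correct replies are fresh $\idk$s, aggregate them). Two caveats: this argument proves the lemma for a slightly amended algorithm, since the pseudocode has no branch for forwarding a received $QC_{\idk}$; and the alternative you mention of ``aggregating a fresh threshold certificate'' from a mixture is not actually available, because an already-combined $(t+1,n)$-threshold signature cannot in general be decomposed back into shares, and the forwarded certificate signs $\idk$ tagged with a different phase number. Rely on the re-broadcast option only.
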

\begin{proof}

If the leader is correct it broadcasts a $\helpreq$ message at line~\ref{l:help_bb_2}. All correct processes then answer at round 2. If the leader receives a value signed by the sender at line~\ref{l:r_s_v}, it broadcasts it at line~\ref{l:bc_v}. Otherwise, no correct processes received a value signed by the sender and sends an $\idk$ message at line~\ref{l:reply_idk}. Since $n=2t+1$, the leader receives at least $t+1$ $\idk$ messages (from the correct processes) and forms an $\idk$ certificate. It broadcasts this value at line~\ref{l:bc_idk}.
In both cases, all correct processes return a valid value at line~\ref{l:return_valid}.
\end{proof}

The next lemma proves that if all correct processes invoke a phase with a value other than $\bot$, then they can return only one type of a valid value -- a value signed by the sender.

\begin{lemma}
\label{lemma:one_type_valid}
If all correct processes invoke a phase with value $v\ne \bot$, there does not exists a value signed by $t+1$ processes in the system.
\end{lemma}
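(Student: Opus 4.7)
The plan is to track precisely which parties can contribute signatures to the ``$t+1$ processes'' clause of the $\emph{BB\_valid}$ predicate. In this algorithm, the only message a correct process ever signs on its own behalf is $\idk$, at line~\ref{l:reply_idk}. By inspection of the code, a correct process reaches this line only if its local BA initial value is still $\bot$; otherwise, at line~\ref{l:reply_v}, it forwards its $v_i$ unchanged, without adding any fresh signature. So the only way a correct-process signature appears on a value not already signed by the sender is via an $\idk$ share.

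Next I would establish a monotonicity invariant: once a correct process sets $v_i \neq \bot$, either from the sender's initial broadcast at line~\ref{l:rec_init_v} or from a previous phase at line~\ref{l:rec_val_v}, it never resets $v_i$ to $\bot$. Consequently, any correct process that begins the current phase with $v_i \neq \bot$ will not emit a new $\idk$ share, now or in any later phase.

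Under the hypothesis of the lemma, every correct process enters the phase with $v_i \neq \bot$, so no correct process contributes an $\idk$ share in this phase. Because $\idk$ shares are bound to the phase identifier (as is standard for threshold-signature quorum certificates in per-phase protocols), shares from earlier phases cannot be aggregated into a certificate for the current phase. The adversary controls at most $t$ processes and can therefore gather at most $t$ shares on the current phase's $\idk$, strictly fewer than the $t+1$ threshold required to assemble a certificate. A symmetric counting argument rules out any other $t+1$-signature certificate: since correct processes never place a fresh signature on any non-$\idk$ value, at most $t$ distinct signatures on such a value can exist. Hence no value signed by $t+1$ processes exists in the system.

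The main obstacle is justifying that $\idk$ shares produced in earlier phases, when some correct processes might still have had $v_i = \bot$, cannot be replayed into a certificate for the current phase. I would address this by explicitly binding the signed payload to the phase number, which is the standard practice when using threshold signatures in round-based protocols. Once that point is pinned down, the remainder is a direct counting argument combining monotonicity of $v_i$ with the bound of $t$ Byzantine signatures per phase.
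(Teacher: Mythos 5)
Your proof is correct and follows essentially the same route as the paper's: under the hypothesis every correct process answers the leader at line~\ref{l:reply_v} rather than line~\ref{l:reply_idk}, so at most the $t$ Byzantine processes can contribute $\idk$ signatures, which falls short of the $t+1$ threshold needed for $QC_{\idk}$. The additional care you take about the monotonicity of $v_i$ and the binding of $\idk$ shares to the phase number is sound and makes the argument tighter, but it is the same counting argument the paper gives in terser form.
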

\begin{proof}
If all correct processes invoke a phase with value $v\ne \bot$, they reply to the $\helpreq$ messages at line~\ref{l:reply_v} and never send an $\idk$ message.
Since there are at most $t$ Byzantine processes, the leader cannot receive $t+1$ $\idk$ messages and form an $\idk$ certificate signed by $t+1$ different processes.

\end{proof}

We now prove the correctness of the BB algorithm. 
First, to be able to use the weak BA, all correct processes must execute it with valid initial values.

\begin{lemma}
All correct processes execute line~\ref{l:decision} with a valid initial value.
\end{lemma}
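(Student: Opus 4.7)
The plan is to argue that for every correct process $p_i$, the local variable $v_i$ is valid (i.e. $\emph{BB\_valid}(v_i) = \emph{true}$) by the time the weak BA is invoked on line~\ref{l:decision}. The argument decomposes into two observations: (a) once $v_i$ is set to a valid value, every subsequent update preserves validity, and (b) every correct process must set a valid $v_i$ at some point during the $n$ phases.

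For (a), the only place $v_i$ is updated during a phase is at line~\ref{l:rec_val_v}, and this update fires only when the process returns a value $v \neq \bot$ from the phase. By property (2) of the phase, already established immediately from lines~\ref{l:return_valid}--\ref{l:return_bottom}, every non-$\bot$ return value is valid. Hence any update leaves $v_i$ valid, and validity is preserved across phases.

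For (b), I would use the fact that each of the $n$ processes is the designated leader of exactly one phase, together with the rule that a leader initiates a phase (i.e.\ sends the $\helpreq$ of line~\ref{l:help_bb_2}) precisely when it has not yet set $v_i$. Consider a correct process $p_j$. If $p_j$ receives the sender's signed value during the initial broadcast and adopts it on line~\ref{l:rec_init_v}, or if $p_j$ returns a non-$\bot$ value in some earlier phase, then $v_j$ is already valid. Otherwise, $v_j$ is still unset when $p_j$'s own phase begins, so $p_j$ initiates a non-silent phase as its leader. Since $p_j$ is correct, Lemma~\ref{lemma:only_valid} applies and guarantees that every correct process, and in particular $p_j$ itself, returns a valid value from this phase, which it then installs into $v_j$ at line~\ref{l:rec_val_v}.

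Combining (a) and (b), by the end of the final phase every correct process holds a valid $v_i$, which is the input handed to the weak BA on line~\ref{l:decision}. The main subtlety to be careful about is the case where $p_j$ never adopts the sender's value and the earlier phase leaders are Byzantine and silent toward $p_j$; the argument above covers this by leveraging the fact that $p_j$ is itself the leader of one of the $n$ phases, so the vetting cannot terminate with $v_j$ unset.
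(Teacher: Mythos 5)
Your proof is correct and follows essentially the same route as the paper's: case on whether a correct process $p_i$ has already set a valid $v_i$ before its own leader phase (via line~\ref{l:rec_init_v} or line~\ref{l:rec_val_v}), and otherwise invoke Lemma~\ref{lemma:only_valid} for the non-silent phase that $p_i$ itself initiates. Your explicit observation (a) that later updates preserve validity is only implicit in the paper, but the substance is identical.
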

\begin{proof}
Let $p_i$ be a correct process.
In $n$ phases, there is one phase with $p_i$ as leader.
If $p_i$ has updated $v_i$ prior to that phase, it happened either line~\ref{l:rec_init_v} or at line~\ref{l:rec_val_v}.
Immediately from the code we get that in both cases $p_i$ updates a valid value.
If $p_i$ did not update a value, it initiates a non-silent phase, and by Lemma~\ref{lemma:only_valid} returns a valid value.
\end{proof}

Note that agreement and termination stem immediately from the code and the correctness of the weak BA. The following lemma proves validity.

\begin{lemma}
If \emph{sender} is correct, then all correct processes decide $v_{\emph{sender}}$.
\end{lemma}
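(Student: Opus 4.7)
The plan is to show two things in sequence: first, that when the sender is correct the only BB\_valid value that can ever appear in the run is $\lr{v_{\emph{sender}}}_{\emph{sender}}$, and second, that under this uniqueness the weak BA invocation at line~\ref{l:decision} must output exactly that value. The final step is then a one-line application of line~\ref{l:extract_v_s}.

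For the first step, I would observe that a correct sender broadcasts $\lr{v_{\emph{sender}}}_{\emph{sender}}$ at line~\ref{l:initial_s_v}, so by synchrony every correct $p_i$ executes line~\ref{l:rec_init_v} and sets $v_i \gets \lr{v_{\emph{sender}}}_{\emph{sender}}$ before entering the for loop. From the code, $v_i$ is only ever updated at line~\ref{l:rec_val_v} and only to a non-$\bot$ value, so every correct process enters every phase with some $v_i \neq \bot$. Lemma~\ref{lemma:one_type_valid} then tells us that in no phase can the leader assemble a threshold-signature $\idk$ certificate, because that would require $t+1$ distinct $\idk$ signatures and at most $t$ Byzantine processes can produce them. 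Combined with the fact that a correct sender never signs anything other than $v_{\emph{sender}}$, this means no value $v' \neq v_{\emph{sender}}$ can satisfy \emph{BB\_valid}; equivalently, the only valid value that exists in the run (in the sense of the unique validity definition) is $\lr{v_{\emph{sender}}}_{\emph{sender}}$.

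For the second step, I would use the already-established fact (the preceding lemma that correct processes reach line~\ref{l:decision} with a valid input) to conclude that every correct process invokes weak BA with the input $\lr{v_{\emph{sender}}}_{\emph{sender}}$, since this is the only valid value available. By the unique validity of weak BA, a decision of $\bot$ would require more than one valid value to exist, contradicting what we just proved; hence the weak BA output is $\lr{v_{\emph{sender}}}_{\emph{sender}}$. Line~\ref{l:extract_v_s} then extracts $v_{\emph{sender}}$ into $\emph{decision}$ at every correct process.

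The only mildly subtle step is ruling out that a Byzantine coalition can, across different phases or via messages sent outside the protocol, forge a $t+1$-signed certificate on some adversarial $v'$. The argument there is that the only place where correct processes contribute signatures usable in a BB\_valid certificate is the signed $\idk$ message of line~\ref{l:reply_idk}, and correct processes never reach that line once $v_i \neq \bot$; so Byzantine processes can collect at most $t$ signatures on any $\idk$, which is one short of the threshold. I would state this observation explicitly as the core reason uniqueness of the valid value holds, and the rest of the proof is essentially bookkeeping.
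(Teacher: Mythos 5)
Your proof is correct and takes essentially the same route as the paper's: use Lemma~\ref{lemma:one_type_valid} to rule out any $t+1$-signed certificate, conclude that $\lr{v_{\emph{sender}}}_{\emph{sender}}$ is the unique valid value in the run, and invoke unique validity of the weak BA to force that value as the output. Your write-up is merely more explicit than the paper about why the hypothesis of Lemma~\ref{lemma:one_type_valid} (all correct processes entering each phase with $v_i\neq\bot$) holds in every phase, which the paper leaves implicit.
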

\begin{proof}
If \emph{sender} is correct then all correct process learn $v_{\emph{sender}}$ by the end of round 1 and update their values at line~\ref{l:rec_init_v}.
By Lemma~\ref{lemma:one_type_valid}, in no phase can any process create a value signed by $t+1$ processes.
Thus, when executing the weak BA $v_{\emph{sender}}$ signed by the sender is the only valid value that exists in the run. By unique validity and since the \emph{sender} does not sign more than one initial value, $v_{\emph{sender}}$ is the only possible BA output.
It follows that all correct processes execute line~\ref{l:extract_v_s} and return the sender's value.

\end{proof}

We conclude the following theorem:
\begin{theorem}
Algorithm~\ref{alg:BB_main} solves BB.
\end{theorem}

\section{Weak BA: Correctness}
\label{weak_ba_correct}

We start by proving some lemmas about the phase's guarantees.
First, we prove that if the \emph{decision} is updated in a given phase, then its new value is valid.

\begin{lemma}
\label{lemma:validity}

If a correct process updates \emph{decision} during $invokePhase$, then $v$ is a valid decision value.

\end{lemma}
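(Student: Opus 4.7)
The plan is to trace backwards from the only line where \emph{decision} is updated during a phase, namely line~\ref{l:set_decision}, and use the intersection property of the $\ceil[\big]{\frac{n+t+1}{2}}$-threshold with correctness of honest processes to conclude that $v$ must be valid.

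First, I would observe that inside $invokePhase$ the only place \emph{decision} is set is line~\ref{l:set_decision}, which requires receiving a well-formed $\lr{\finalized,v,QC_{\finalized}(v),j}_{\emph{leader}}$ message. By unforgeability of the threshold scheme, $QC_{\finalized}(v)$ can only exist if $\ceil[\big]{\frac{n+t+1}{2}}$ distinct processes signed a $\lr{\decide,v,j}$ message. With $n=2t+1$, we have $\ceil[\big]{\frac{n+t+1}{2}}=\ceil[\big]{\frac{3t+2}{2}}>t$, so at least one correct process $p_c$ signed such a $\decide$ message.

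Next, correct processes only send $\decide$ messages at line~\ref{l:sign_dec}, and only after receiving a well-formed commit certificate $\lr{\commit,v,QC_{\commit}(v),j}_{\emph{leader}}$. So $p_c$ must have seen a valid $QC_{\commit}(v)$. I would then argue, by an analogous counting argument on the $\ceil[\big]{\frac{n+t+1}{2}}$ $\vote$ signatures inside $QC_{\commit}(v)$, that at least one correct process $p_{c'}$ signed $\lr{\vote,v,j}$. By line~\ref{l:sign_vote}, correct processes only vote for $v$ after checking (line~\ref{l:valid_v}) that $v$ is valid. Hence $v$ is valid.

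The main subtlety I expect is handling how a commit certificate can reach the leader: it may come directly from $\ceil[\big]{\frac{n+t+1}{2}}$ fresh $\vote$ messages (line~\ref{l:thresh_votes}), or be forwarded by a process that previously stored one in \emph{commit\_proof} (line~\ref{l:first_commit}). I would resolve this by noting that in either case, the certificate is an unforgeable threshold signature object, and by induction on the phase index $j$, \emph{any} $QC_{\commit}(v)$ that ever exists in the system was ultimately assembled from $\ceil[\big]{\frac{n+t+1}{2}}$ votes in some phase. Since at least one of those votes came from a correct process that validated $v$, validity of $v$ propagates through any subsequent forwarding. This closes the argument.
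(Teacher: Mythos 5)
Your proof is correct and follows essentially the same chain as the paper's: a finalize certificate implies a correct signer of a $\decide$ message, which implies a commit certificate, which implies a correct signer of a $\vote$ message, which implies the validity check at line~\ref{l:valid_v} passed. Your extra induction handling commit certificates forwarded from earlier phases is a welcome bit of care that the paper's own proof elides, but it does not change the structure of the argument.
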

\begin{proof}

If a correct process updates its \emph{decision} value at line~\ref{l:set_decision} of $invokePhase$ then it must have received a finalized certificate signed by $\ceil[\big]{\frac{n+t+1}{2}}$ processes. Hence, at least one correct process $p'$ signed the decide message for $v$ at line~\ref{l:sign_dec}. 
By the code, $p'$ signed the decide message for $v$ if it received a commit certificate signed by $\ceil[\big]{\frac{n+t+1}{2}}$ processes. Hence, at least one correct process $p''$ signed the vote message for $v$ at line~\ref{l:sign_vote}. By the code, this is possible only if $v$ is a valid value (line~\ref{l:valid_v}).

\end{proof}

Next, we prove that all correct processes that update their \emph{decision} variable do so the same value. Moreover, at most one valid \emph{decide\_proof} can exist in the system. That is, a Byzantine process cannot devise a \emph{decide\_proof} that conflicts with any other \emph{decide\_proof} known by correct processes.

\begin{lemma}
\label{lemma:agreement_of_invoke}

All correct processes that update \emph{decision} during $invokePhase$ return the same \emph{decision}. In addition, at most one $\finalize$ certificate can be formed in all phases.

\end{lemma}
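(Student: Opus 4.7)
The plan is to exploit the quorum-intersection property built into the $\ceil[\big]{\frac{n+t+1}{2}}$ threshold together with a simple monotonicity invariant on the local \emph{commit} variable. Any two subsets of $\Pi$ of size $\ceil[\big]{\frac{n+t+1}{2}}$ must overlap in at least $2\ceil[\big]{\frac{n+t+1}{2}} - n \geq t+1$ processes and therefore contain at least one correct process in common. The monotonicity invariant I will use is: once a correct process sets its \emph{commit} variable to some non-$\bot$ value, the code of $invokePhase$ never resets it, since \emph{commit} is written only inside a ``if $\emph{commit}=\bot$'' block. Together, these two facts let me rule out conflicting commit certificates, and from that rule out conflicting finalize certificates.

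First I would establish the within-phase part: in any fixed phase $j$, at most one value can be the subject of a valid commit certificate. A commit certificate for $v$ in phase $j$ requires $\ceil[\big]{\frac{n+t+1}{2}}$ distinct $\lr{\vote,v,j}$ signatures. If a conflicting certificate for some $v'\neq v$ also existed, intersecting the two signer sets would yield at least one correct process that signed both. But the guard at line~\ref{l:sign_vote} fires only on the \emph{first} propose message of phase $j$ a correct process sees, so a correct process signs at most one vote per phase. This contradiction shows that within a phase there is at most one commit certificate, and consequently at most one finalize certificate, whose value equals that of the unique commit certificate.

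Next I would prove the cross-phase part: if a finalize certificate for $v$ is formed in some phase $j$, then no commit certificate for any $v'\neq v$ can be formed in any later phase $j'>j$. Intersecting the $\ceil[\big]{\frac{n+t+1}{2}}$-sized set of decide signers of phase $j$ with the $\ceil[\big]{\frac{n+t+1}{2}}$-sized set of vote signers of phase $j'$ yields a correct process $p$ that both signed $\lr{\decide,v,j}$ and later signed $\lr{\vote,v',j'}$. However, signing decide at round~4 of phase $j$ forces the ``if $\emph{commit}=\bot$'' block of that round to be reached, after which $p$'s \emph{commit} is non-$\bot$ (either becoming $v$ or already holding some earlier value). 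By the monotonicity invariant $p$'s \emph{commit} is still non-$\bot$ when $p$ enters phase $j'$, so the precondition at line~\ref{l:sign_vote} fails and $p$ cannot have signed a vote for $v'$ in phase $j'$, a contradiction.

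Combining the two parts, at most one value $v$ can ever be the subject of a finalize certificate across all phases, and any finalize certificate is, by the within-phase argument, determined by the single commit certificate of its phase. Hence at most one $\finalize$ certificate is ever formed. Every correct process that updates \emph{decision} at line~\ref{l:set_decision} does so on the basis of such a certificate, and therefore updates to the same value. The main subtlety I expect to have to handle carefully is the cross-phase case, where I must be precise that the round~4 control flow actually runs the $\emph{commit}\gets v$ update immediately after sending decide, so that by the beginning of phase $j+1$ the invariant $\emph{commit}\ne\bot$ already applies to every correct decide-signer of phase $j$.
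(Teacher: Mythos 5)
Your proof is correct and follows essentially the same route as the paper's: a same-phase case resolved by intersecting two $\ceil[\big]{\frac{n+t+1}{2}}$-quorums (at least one common correct signer, who signs at most once per phase), and a cross-phase case resolved by observing that every correct decide-signer ends the earlier phase with $\emph{commit}\neq\bot$ and therefore never votes again, so no conflicting commit quorum can form later. The only cosmetic difference is that you phrase the cross-phase step as a direct quorum intersection ($2\ceil[\big]{\frac{n+t+1}{2}}-n\geq t+1$) where the paper counts the remaining uncommitted correct processes ($\frac{n-t-1}{2}+t<\ceil[\big]{\frac{n+t+1}{2}}$); these are the same inequality.
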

\begin{proof}


Assume that a correct process $p_i$ sets its decision value to $v$ in phase $l$ and another correct process $p_j$ sets its decision value to $w$ in phase $k\geq l$.

If $k=l$, then $p_i$ and $p_j$ set their decision value in the same round and they both receive a finalize certificate signed by $\ceil[\big]{\frac{n+t+1}{2}}$ different processes. At least one correct process signed both certificates and since correct processes sign at most one finalize message per phase, $v=w$.

For the case where $k>l$: in phase $l$, $p_i$ receives a finalize certificate signed by $\ceil[\big]{\frac{n+t+1}{2}}$ different processes. Thus, at least $\ceil[\big]{\frac{n+t+1}{2}}-t\geq \frac{n-t+1}{2}$ correct processes updated their commit to $v$ in that phase, along with a matching commit proof and $\emph{commit\_level}=l$ (line~\ref{line:rank}).
Since these processes are committed to $v$, they do not vote for any value proposed by a leader in the following phases.
Thus at most $n-t-\frac{n-t+1}{2}=\frac{n-t-1}{2}$ correct processes can sign a conflicting proposed value in any phase greater than $l$.
Since $\frac{n-t-1}{2}+t<\ceil[\big]{\frac{n+t+1}{2}}$, in any phase greater than $l$, no process can collect $\ceil[\big]{\frac{n+t+1}{2}}$ signatures on any value other than $v$.
Because processes that updated $\emph{commit\_level}=l$ do not accept commitments on values with commit level $d<l$ (line~\ref{line:checklevel}), at most $\frac{n-t-1}{2}$ can send a decide message on a value committed in phase $d<l$. Thus, at most $\frac{n-t-1}{2}+t<\ceil[\big]{\frac{n+t+1}{2}}$ decide messages for $w\neq v$ can be sent. Finally, no process can form and send a valid finalize certificate and decide upon any other value. Thus, $v=w$.

\end{proof}

We prove next that once a correct process is the leader of a non-silent phase, all correct processes return the same valid decision value by the end of that phase.

\begin{lemma}
\label{lemma:early_decision}
If a correct leader invokes $invokePhase$ in phase $k$ and $f<\frac{n-t-1}{2}$, then all correct processes return the same valid \emph{decision} by the end of the phase and this decision is a proposal of a correct process.

\end{lemma}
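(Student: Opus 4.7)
The plan is to trace phase $k$ round-by-round. The only arithmetic ingredient I need up front is that $f<\frac{n-t-1}{2}$ is equivalent to $n-f>\frac{n+t+1}{2}$, and hence $n-f\geq\ceil[\big]{\frac{n+t+1}{2}}$; this makes every quorum used inside $invokePhase$ reachable by correct processes alone.

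Since $p_L$ is correct and still has $decision=\emph{undecided}$, in round~1 it broadcasts $\lr{\propose, v_L, k}_{p_L}$ with its own valid input $v_L$. In round~2 every correct process either votes for $v_L$ at line~\ref{l:sign_vote} (when its $commit$ is $\bot$) or sends back its stored commit certificate at line~\ref{l:first_commit}, giving $p_L$ at least $n-f$ round-2 responses. In round~3 $p_L$ either forwards a received commit message or, if none arrived, aggregates $n-f\geq\ceil[\big]{\frac{n+t+1}{2}}$ correct vote signatures into $QC_{\commit}(v_L)$. Let $v^\ast$ denote the value in the single commit certificate it broadcasts. In round~4 every correct process receives that certificate, signs a decide on $v^\ast$ (updating $commit\gets v^\ast$ if it was $\bot$), and in round~5 $p_L$ collects $\geq\ceil[\big]{\frac{n+t+1}{2}}$ such signatures, forms $QC_{\finalized}(v^\ast)$, and broadcasts it. Every correct process then sets $decision\gets v^\ast$ at line~\ref{l:set_decision} and returns it at line~\ref{l:phase_return}.

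From here the ``same valid decision'' part is bookkeeping: $v^\ast$ is valid by Lemma~\ref{lemma:validity}; agreement among processes that first decide inside phase $k$ holds because $p_L$ produces only one commit and one finalize certificate in this phase; and agreement with any correct process that had already decided in some earlier phase is immediate from Lemma~\ref{lemma:agreement_of_invoke}. The last clause---that $v^\ast$ is the input of some correct process---splits into two cases: if no commit message reached $p_L$ in round~2 then $v^\ast=v_L$, the correct leader's own input, and we are done; otherwise $v^\ast$ is some value $c$ committed in an earlier phase $\ell<k$. For this latter case I would argue by induction on $\ell$ that $c$ must be a correct process's input, since forming the original $QC_{\commit}(c)$ in phase $\ell$ required $\ceil[\big]{\frac{n+t+1}{2}}-t\geq\frac{n-t+1}{2}$ correct vote signatures on $c$ (all produced in response to one $\propose$ message from the phase-$\ell$ leader), and combining this with the intersection property of the $\ceil[\big]{\frac{n+t+1}{2}}$-out-of-$n$ threshold and the bound $f<\frac{n-t-1}{2}$ should let me rule out a Byzantine leader injecting a fresh value. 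I expect this last step to be the main obstacle of the proof.
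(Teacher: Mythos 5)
Your round-by-round trace of phase $k$ is essentially the paper's own proof: same arithmetic observation that $f<\frac{n-t-1}{2}$ puts $n-f$ above the $\ceil[\big]{\frac{n+t+1}{2}}$ threshold, same case split on whether the leader receives a stored commit certificate or assembles one from votes, same conclusion that a single $\finalize$ certificate is formed and delivered to all correct processes. That part is correct and complete.

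The genuine gap is exactly where you flag it: the clause that the decision is \emph{a proposal of a correct process}. Be aware that the paper's own proof does not establish this clause either---it simply traces the rounds and stops at ``they all return $v$.'' More importantly, the induction you sketch cannot be closed as described. A correct process votes at line~\ref{l:sign_vote} for \emph{any} valid value proposed to it while $commit=\bot$ (line~\ref{l:valid_v}); validity is checked against the external predicate, not against membership in the set of correct inputs. So a Byzantine leader of an earlier phase $\ell$ can propose a fresh value $c$ that satisfies the predicate but is no correct process's input, collect $\ceil[\big]{\frac{n+t+1}{2}}$ votes from uncommitted correct processes, form $QC_{\commit}(c)$, and hand it to a single correct process, which will then surface $c$ in phase $k$. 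The quorum-intersection property you invoke only rules out two \emph{conflicting} certificates; it says nothing about whether the one certificate that does form carries an adversary-crafted valid value. So the clause, read literally, is not provable from the protocol; what your trace does establish---and what is actually used elsewhere in the paper (the complexity argument and Lemma~\ref{lemma:possible_def})---is the weaker statement that all correct processes return the same \emph{valid} decision by the end of the phase. The stronger clause would only hold under an extra assumption on the predicate, e.g., that Byzantine processes cannot generate valid values other than correct inputs.
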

\begin{proof}

The leader broadcasts its value $v$ to all processes. If there is a correct process $p$ for which $commit\neq \bot$, it sends the message  $\lr{\commit,w,proof,j}_p$ to the leader. 
If the leader receives  $\lr{\commit,w,proof,j}_{p'}$ (from any process), it broadcasts in round 3 a commit certificate for $w$. Otherwise, since $f<\frac{n-t-1}{2}$, leader receives $\ceil[\big]{\frac{n+t+1}{2}}$ messages voting for $v$ and broadcasts a commit certificate for $v$.
Then, all correct processes send the leader a finalize messages on $v$ or $w$. Again, the leader receives $\ceil[\big]{\frac{n+t+1}{2}}$ messages finalizing $v$ and broadcasts a finalize certificate for $v$. Correct processes receive this message and update their \emph{decision} and \emph{decide\_proof} accordingly. Then, by the code they all return $v$.

\end{proof}

We now prove the correctness of the main BA algorithm.
The following two lemmas prove that although some processes may start executing $\mathcal{A}_\emph{fallback}$ at different times, they all successfully execute the fallback algorithm.

\begin{lemma}
\label{lemma:all_fb}
If some correct process executes the fallback algorithm in Algorithm~\ref{alg:BA_main}, all correct process do so and they all start at most $\delta$ time apart.

\end{lemma}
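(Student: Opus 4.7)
The plan is to identify the two lines at which a correct process can first set its local \emph{fallback\_start} to a finite value: either when it collects $t{+}1$ $\helpre$ signatures and forms $QC_\fb$ (triggering the broadcast around line~\ref{l:bc_dec1}), or when it receives an already-formed $\fb$ message inside the while-loop at line~\ref{l:safety_wind} (triggering the echo at line~\ref{l:bc_fb}). The key invariant I would highlight is that whenever a correct process sets $\emph{fallback\_start}\gets \emph{now}+2\delta$, it simultaneously broadcasts a $\fb$ message carrying a valid $QC_\fb$ certificate. Thus, the set of correct processes entering the fallback countdown is closed under ``receive and echo'' within one round.

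Given this, suppose for contradiction that some correct process eventually executes $\mathcal{A}_\emph{fallback}$. Let $p$ be the first correct process to set $\emph{fallback\_start}$ to a finite value, and let $t_0$ denote that moment. By the invariant, $p$ broadcasts a valid $\fb$ message at time $t_0$. By synchrony, every correct process $q$ receives this message by time $t_0+\delta$. At reception, one of two things holds for $q$: either $q$ has already set $\emph{fallback\_start}$ (in which case, since $p$ is the earliest, $q$ set it at some time in $[t_0,t_0+\delta]$), or $\emph{fallback\_start}=\infty$ at $q$, in which case the inner branch at line~\ref{l:bc_fb} fires, $q$ broadcasts its own $\fb$ message and sets $\emph{fallback\_start}\gets \emph{now}+2\delta$ at some time $t_q\in[t_0,t_0+\delta]$. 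Either way, every correct process sets its timer in the interval $[t_0,t_0+\delta]$.

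Since each correct process invokes $\mathcal{A}_\emph{fallback}$ exactly $2\delta$ after setting its timer (the guard of the while loop at line~\ref{l:safety_wind} with the assignment $\emph{fallback\_start}\gets \emph{now}+2\delta$), the actual invocation times at line~\ref{l:fb} all lie in the interval $[t_0+2\delta,\ t_0+3\delta]$, which has width $\delta$. This simultaneously yields both conclusions: every correct process reaches line~\ref{l:fb} (so all of them execute the fallback), and the spread of their start times is at most $\delta$.

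The only subtlety I expect to address carefully is the ``echo'' step: I must argue that when $q$ receives $p$'s $\fb$ message, $q$ is still inside its while loop and has not already moved past it. This is where the ordering ``$p$ is the earliest'' matters — if $q$ had already passed the loop, it would have had $\emph{fallback\_start}$ finite strictly before $t_0$, contradicting minimality of $t_0$. A secondary point worth a sentence is that the $QC_\fb$ carried by $p$'s message is valid (since $p$ is correct and only forms/relays valid certificates), so $q$'s receipt condition at line~\ref{l:safety_wind} is genuinely satisfied and the inner branch executes.
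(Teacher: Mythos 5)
Your proof is correct and follows essentially the same argument as the paper's: anchor on the earliest correct process (you use the first to set its timer, the paper uses the first to reach line~\ref{l:fb}, which is the same event shifted by $2\delta$), observe that setting the timer coincides with broadcasting the $\fb$ certificate, and use synchrony to bound the spread of start times by $\delta$. Your version is somewhat more detailed in spelling out the two places where $\emph{fallback\_start}$ can become finite and the echo subtlety, but the underlying idea is identical.
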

\begin{proof}
Let $p$ be the first correct process that executes the fallback algorithm at line~\ref{l:fb} of Algorithm~\ref{alg:BA_main} at time $t$.
This means that at time $t-2\delta$, $p$ broadcasts the fallback certificate to all other processes (line~\ref{l:bc_fb}). By synchrony, this certificate is guaranteed to arrive at all correct processes by $t-\delta$, causing them to execute the fallback algorithm by $t+\delta$ if they have not done so earlier.
\end{proof}

\begin{lemma}
\label{lemma:simulate}
Consider a synchronous algorithm $\mathcal{A}$.
Let $\sigma$ be a synchronized run of $\mathcal{A}$ defined as follows.
Let $t$ be the time that the first correct process starts executing $\mathcal{A}$ in $\sigma$. All correct processes start executing $\mathcal{A}$ by $t+\delta$. The round duration is $2\delta$.
In round $r$ that begins (locally) in $t_r$, round $r$ messages are processed if they are received in the time window $[t_r-\delta,t_r+2\delta]$.
Then $\sigma$ is a correct run of $\mathcal{A}$.

\end{lemma}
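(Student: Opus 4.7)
The plan is to reduce $\sigma$ to an idealized synchronous run of $\mathcal{A}$ by simulation: I will show that every correct process's local view in $\sigma$ is indistinguishable from a view it could have in a standard synchronous execution of $\mathcal{A}$, so that the correctness properties of $\mathcal{A}$ (agreement, validity, termination) transfer directly to $\sigma$. The crux is that the doubled round length $2\delta$, together with the processing window $[t_r-\delta,t_r+2\delta]$, exactly compensates for the at-most-$\delta$ skew among the correct processes' local start times.

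First I would fix notation: let $t_p$ denote process $p$'s local start time, so the hypothesis gives $|t_p-t_q|\leq \delta$ for any two correct $p,q$, and write $t_p^r := t_p + 2(r-1)\delta$ for the start of $p$'s round $r$. The key timing invariant to verify is that any round-$r$ message sent by a correct process $p$ at $t_p^r$ and delivered at time $a\in[t_p^r,t_p^r+\delta]$ satisfies $a\in[t_q^r-\delta,t_q^r+2\delta]$ for every correct recipient $q$. Both inclusions follow at once from $|t_p-t_q|\leq\delta$:
\[
a \;\geq\; t_p^r \;\geq\; t_q^r-\delta, \qquad a \;\leq\; t_p^r+\delta \;\leq\; t_q^r+2\delta,
\]
so the message lands inside $q$'s round-$r$ processing window and is delivered to $\mathcal{A}$ in the intended round.

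To finish, I would construct a reference synchronous run $\sigma'$ of $\mathcal{A}$ in which all processes start simultaneously at time $t$ with round length $2\delta$, and in which the adversary schedules each Byzantine message so that it is delivered to the same correct process, tagged to the same round, as in $\sigma$. The timing invariant handles all correct-to-correct traffic, while the unconstrained behaviour of Byzantine processes realises everything else, so the per-process sequence of round-$r$ incoming messages coincides in $\sigma$ and $\sigma'$. Since $\mathcal{A}$ is a correct synchronous protocol on $\sigma'$ by hypothesis, its guarantees hold for every correct process in $\sigma$ as well. The only real obstacle is writing the simulation step cleanly enough to make it unambiguous which round a Byzantine message is attributed to; the timing algebra itself is routine.
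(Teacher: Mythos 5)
Your proposal is correct and its core — the timing computation showing that a round-$r$ message from a correct sender lands in every correct recipient's window $[t_q^r-\delta,\,t_q^r+2\delta]$ via the two inequalities derived from the at-most-$\delta$ skew — is exactly the argument the paper gives. The explicit reduction to a reference run $\sigma'$ only spells out what the paper leaves implicit after its "as needed," so the two proofs are essentially the same.
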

\begin{proof}

Consider a process $p$ that starts round $r$ at time $t^p_{r}$.
Let $p'$ be another correct process that starts round $r$ at time $t^{p'}_{r}$, and sends a message to $p$ in round $r$.
By assumption, $t^{p'}_{r}=t^p_{r}+\epsilon$ where $-\delta\le \epsilon\le \delta$, and a message sent by $p'$ at $t^{p'}_{r}$ arrives at time $t_a$ where $t^{p'}_{r}\le t_a \le t^{p'}_{r}+\delta$.
Note that round $r$ ends at $p$ at time $t^p_{r+1}=t^p_{r}+2\delta$.
Hence, $t^{p}_{r}-\delta \le t_a \le t^{p}_{r}+2\delta$, as needed.

\end{proof}

Next, we the following lemma states that if a correct process manages to reach a decision prior to the fallback algorithm, then this is the only possible decision. Moreover, this decision value must be a valid one.

\begin{lemma}
\label{lemma:possible_def}
If some correct process decides $v$ before executing the fallback algorithm, then all correct processes decide $v$ and $v$ is valid.
\end{lemma}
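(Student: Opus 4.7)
The plan is to split the claim into (i) validity of $v$ and (ii) agreement that every correct process eventually decides $v$, dispatching the agreement part by cases on whether $\mathcal{A}_\emph{fallback}$ is ever invoked. For validity, a correct process $p$ can set $\emph{decision}$ before the fallback only at line~\ref{l:inv_phase} (inside \emph{invokePhase}) or at line~\ref{l:got_help}. The first case is covered by Lemma~\ref{lemma:validity}; the second is immediate from the enclosing conditional, which explicitly requires $v$ and the accompanying \emph{decide\_proof} to be valid. In both cases $p$ ends up holding a valid $\finalize$ certificate for $v$, and by Lemma~\ref{lemma:agreement_of_invoke} no $\finalize$ certificate for any value other than $v$ can be formed anywhere in the run.

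Suppose first that no correct process ever invokes $\mathcal{A}_\emph{fallback}$. For an arbitrary correct process $q$: if $q$ set $\emph{decision}$ during the phases, Lemma~\ref{lemma:agreement_of_invoke} forces that value to equal $v$; otherwise $q$ is still undecided at line~\ref{l:end_phases} and broadcasts $\helpre$ at line~\ref{l:help}. Since $p$ has decided $v$ and carries a valid proof, $p$ replies at line~\ref{l:a_help} in round 2, and by synchrony $q$ updates $\emph{decision} \gets v$ at line~\ref{l:got_help} in round 3. As fallback is never invoked, the tail of the algorithm leaves $\emph{decision}$ untouched, so $q$ decides $v$.

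Now suppose $\mathcal{A}_\emph{fallback}$ is invoked. By Lemma~\ref{lemma:all_fb} every correct process executes $\mathcal{A}_\emph{fallback}$ with start times within $\delta$ of one another, and Lemma~\ref{lemma:simulate} justifies treating this as a correct synchronous execution with round length $2\delta$. I would show every correct process enters $\mathcal{A}_\emph{fallback}$ with $\emph{bu\_decision}=v$. Any correct process that already decided $v$ carries this into $\emph{bu\_decision}$ via the assignment immediately preceding the while loop. For an undecided correct process, within its $2\delta$ safety window it receives a $\fb$ broadcast issued by $p$ (or by any other decided correct process): such a process either sent $\fb$ directly at line~\ref{l:bc_dec1} after collecting $t+1$ $\helpre$ signatures in round 2, or---since fallback is globally invoked---received some $\fb$ while its own $\emph{fallback\_start}=\infty$ and so rebroadcast at line~\ref{l:bc_fb}. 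Either broadcast carries $v$ together with the unique valid $\finalize$ proof, triggering $\emph{bu\_decision}\gets v$ at line~\ref{l:got_fb}. Strong unanimity of $\mathcal{A}_\emph{fallback}$ then yields $\emph{fallback\_val}=v$ uniformly, and undecided processes complete their decision at line~\ref{l:final_dec} (validity of $v$ prevents them from falling through to line~\ref{l:ret_def}).

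The main obstacle will be the timing argument underlying the fallback case: I need to verify that a decided correct process's $\fb$ broadcast actually lands inside the $2\delta$ safety window of every undecided correct process. This is a matter of combining the $\delta$ skew bound of Lemma~\ref{lemma:all_fb} with the $\delta$ bound on message delay and observing that the $2\delta$ safety window is exactly wide enough to absorb both. A subsidiary point is arguing that $p$ itself broadcasts $\fb$ at all---if $p$ did not trigger fallback in round 2, then because fallback is globally invoked, $p$ must receive a $\fb$ message while its own $\emph{fallback\_start}=\infty$, firing the rebroadcast at line~\ref{l:bc_fb} with $v$ and its proof attached via $\emph{bu\_decision}$ and $\emph{bu\_proof}$.
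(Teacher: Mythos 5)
Your proof is correct and follows essentially the same route as the paper's: validity via Lemma~\ref{lemma:validity} (or the explicit check guarding line~\ref{l:got_help}), agreement among early deciders via Lemma~\ref{lemma:agreement_of_invoke}, help replies for undecided processes, and strong unanimity of $\mathcal{A}_\emph{fallback}$ after every correct process adopts $v$ as $\emph{bu\_decision}$ in the $2\delta$ safety window. The ``subsidiary point'' you flag---whether a process that decides only at line~\ref{l:got_help} (round 3) ever attaches $v$ and its proof to a $\fb$ broadcast, given that line~\ref{l:bc_dec1} fires already in round 2---is precisely the step the paper's own proof asserts without argument (``they receive $v$ from $p$''), so your write-up matches, rather than falls short of, its level of rigor.
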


\begin{proof}

If there exists a correct process $p$ that decides at line~\ref{l:inv_phase}, then by Lemma~\ref{lemma:agreement_of_invoke} and the code all processes that decide at line~\ref{l:inv_phase} decide $v$ as well. Moreover, all other correct process that have not decided by line~\ref{l:end_phases}, send $\helpre$ messages. Process $p$ answers them and they all decide at line~\ref{l:got_help}.
Otherwise, no correct process decides at line~\ref{l:inv_phase} and they all send $\helpre$ messages at line~\ref{l:help}. Then, they all receive $t+1$ $\help$ messages and by the code perform the fallback algorithm. In addition, by the lemma assumption, it must be that $p$ decides $v$ at line~\ref{l:got_help}.

If correct processes execute the fallback algorithm, then by the code they all wait a time period of $2\delta$ before the execution, during which they receive all decisions made by other correct processes and update \emph{bu\_decision} accordingly (line~\ref{l:fb}). Specifically, they receive $v$ from $p$.
It follows from Lemma~\ref{lemma:agreement_of_invoke} that \emph{bu\_decision} is updated with the same value at all correct processes.
Thus, all correct processes execute $\mathcal{A}_{\emph{fallback}}$ with the same input, and by strong unanimity they set \emph{fallback\_val} to $v$ at line~\ref{l:fb}.

We now prove that $v$ is valid.
If $p$ decides $v$ at line~\ref{l:inv_phase}, then it must have updated \emph{decision} in the scope of the relevant phase. By Lemma~\ref{lemma:validity} this value is valid. Otherwise, if $p$ decides $v$ at line~\ref{l:got_help}, then the validity follows from the code. Hence, since $v$ is valid, all correct processes decide it by line~\ref{l:final_dec}.

\end{proof}

Finally, we are ready to prove the required BA properties.

\begin{lemma}[Agreement]
\label{lemma:agreement_ba}
In Algorithm~\ref{alg:BA_main} all correct process decide on the same value.
\end{lemma}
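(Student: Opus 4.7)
The plan is to proceed by a case analysis on whether any correct process reaches a decision before it executes the fallback algorithm $\mathcal{A}_{\emph{fallback}}$ at line~\ref{l:fb}.

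\textbf{Case 1: some correct process decides before the fallback.} Here the agreement statement is immediate from Lemma~\ref{lemma:possible_def}, which asserts exactly that if any correct process decides $v$ prior to executing $\mathcal{A}_{\emph{fallback}}$, then every correct process decides $v$. So this case reduces to a single citation.

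\textbf{Case 2: no correct process has decided before the fallback.} I would first argue that every correct process actually reaches and executes $\mathcal{A}_{\emph{fallback}}$. Since no correct process has decided, every correct process broadcasts $\helpre$ at line~\ref{l:help}, so each one receives at least $n-t = t+1$ distinct $\helpre$ messages, forms a $\fb$ certificate, broadcasts it, and sets \emph{fallback\_start}. Lemma~\ref{lemma:all_fb} then guarantees that the starting times of $\mathcal{A}_{\emph{fallback}}$ across correct processes differ by at most $\delta$, and Lemma~\ref{lemma:simulate} with round length $\delta'=2\delta$ certifies this as a valid synchronous run of $\mathcal{A}_{\emph{fallback}}$. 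Applying the agreement property of $\mathcal{A}_{\emph{fallback}}$, all correct processes obtain the same value in \emph{fallback\_val}. Because the conditional at lines~\ref{l:pos_r1}--\ref{l:pos_r2} is a deterministic function of \emph{fallback\_val} (every correct process having $\emph{decision}=\emph{undecided}$ at that point), they all end up with the same \emph{decision}: either the common \emph{fallback\_val} if it satisfies the weak BA predicate, or $\bot$ otherwise.

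The main obstacle is the seam between the two cases. A correct process deciding early (through line~\ref{l:inv_phase} or line~\ref{l:got_help}) may coexist with other correct processes that proceed all the way into the fallback, so agreement cannot be obtained purely from $\mathcal{A}_{\emph{fallback}}$'s agreement; it must come through Lemma~\ref{lemma:possible_def}, which in turn depends on the $2\delta$ safety window propagating the early-decided value $v$ together with its \emph{decide\_proof} into \emph{bu\_decision} at every participant before the fallback starts, and on the strong unanimity of $\mathcal{A}_{\emph{fallback}}$ to preserve $v$ as the fallback output. Once that lemma is invoked the agreement conclusion in Case 1 is immediate, and the remaining verification is just the routine deterministic-branching argument of Case 2.
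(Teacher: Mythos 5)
Your proposal is correct and follows essentially the same route as the paper's proof: both rely on Lemma~\ref{lemma:possible_def} to handle any correct process that decides before the fallback, and on Lemmas~\ref{lemma:all_fb} and~\ref{lemma:simulate} together with the agreement of $\mathcal{A}_{\emph{fallback}}$ for processes that decide through the fallback. Your explicit two-case split (versus the paper's enumeration of decision sites at lines~\ref{l:inv_phase} and~\ref{l:got_help} via Lemma~\ref{lemma:agreement_of_invoke}) is only a presentational difference, since Lemma~\ref{lemma:possible_def} already subsumes that enumeration.
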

\begin{proof}

First, by Lemma~\ref{lemma:agreement_of_invoke}, all correct processes that decide in line~\ref{l:inv_phase} decide the same value $v$.
In addition, it follows from the same lemma that every correct process
that decides at line~\ref{l:got_help} after receiving a valid $\finalize$ certificate decides $v$, as at most one $\finalize$ certificate can be formed.

It is left to show that if not all correct processes decide before the fallback algorithm at line~\ref{l:fb}, they still decide upon the same value.
If at least one correct process $p$ receives a fallback certificate it follows from Lemma~\ref{lemma:all_fb} that all correct processes receive the certificate within at most $\delta$ time of $p$. 
Then, by the code, all correct process execute the fallback algorithm at line~\ref{l:fb} and by Lemma~\ref{lemma:simulate} and the fallback algorithm solves strong BA, providing agreement.
By Lemma~\ref{lemma:possible_def}, we get that processes that decide before running the fallback decide on the same value.

\end{proof}

\begin{lemma}[Termination]
\label{lemma:liveness}
In Algorithm~\ref{alg:BA_main} all correct process decide.
\end{lemma}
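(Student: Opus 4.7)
The plan is to analyze an arbitrary correct process $p$ and show that $p$'s \emph{decision} variable is eventually updated to a value other than \emph{undecided}. I would first split on whether $p$ decides inside the $t+1$-phase loop. If $p$ sets \emph{decision} at line~\ref{l:inv_phase} during some call to $invokePhase$, there is nothing further to show.

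Otherwise $p$ reaches line~\ref{l:help} with $\emph{decision}=\emph{undecided}$, and therefore broadcasts a $\helpre$ message. I would then sub-case-split on whether any correct process did decide inside the phases. If some correct $q$ did decide inside the phases, then by reliable synchronous delivery $q$ receives $p$'s $\helpre$ in round 2 and responds at line~\ref{l:a_help} with a $\help$ message carrying $q$'s decision together with its $\emph{decide\_proof}$. By Lemma~\ref{lemma:validity} this value is valid and by Lemma~\ref{lemma:agreement_of_invoke} the accompanying $\finalize$ certificate is genuine, so $p$ accepts it and updates its decision at line~\ref{l:got_help}.

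If, on the other hand, no correct process decided during the phases, then every one of the $n-f\ge t+1$ correct processes broadcasts $\helpre$, so $p$ itself receives at least $t+1$ distinct $\helpre$ signatures, forms $QC_{\fb}$, broadcasts it at line~\ref{l:bc_dec1}, and sets $\emph{fallback\_start}\gets \emph{now}+2\delta$. This guarantees that the while loop at line~\ref{l:safety_wind} exits within $2\delta$. By Lemma~\ref{lemma:all_fb} all correct processes enter $\mathcal{A}_{\emph{fallback}}$ within $\delta$ of each other, and by Lemma~\ref{lemma:simulate} the resulting trace is a valid synchronous execution of $\mathcal{A}_{\emph{fallback}}$. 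Its termination property yields a value for $\emph{fallback\_val}$, after which $p$ unconditionally sets \emph{decision} in the block spanning lines~\ref{l:pos_r1}--\ref{l:pos_r2}, either to $\emph{fallback\_val}$ or to $\bot$.

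The subtle case that I expect to be the main obstacle is the mixed one where some correct processes decide inside the phases while others do not: then only a proper subset of the correct processes sends $\helpre$, and whether $t+1$ signatures are actually collected (and hence whether the fallback path is triggered) may depend on how many Byzantine processes choose to join in. The sub-case analysis above handles this cleanly, because any undecided correct process is guaranteed to receive a valid $\help$ response from every decided correct process by reliability and synchrony alone, so termination goes through regardless of whether the fallback certificate is eventually formed for $p$.
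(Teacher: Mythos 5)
Your proof is correct and follows essentially the same route as the paper's: undecided processes are rescued either by a $\help$ reply from some correct process that decided during the phases, or by the fallback path via Lemmas~\ref{lemma:all_fb} and~\ref{lemma:simulate}. The only difference is cosmetic — you case-split on whether some correct process decided in the phases, whereas the paper splits on whether any correct process receives a $\fb$ certificate and then infers the existence of a decided helper; both decompositions are exhaustive and rest on the same two mechanisms.
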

\begin{proof}

If not all correct processes decide before line~\ref{l:end_phases} and no correct process receives a fallback certificate, it follows that less than $t+1$ correct processes broadcast help messages at line~\ref{l:help}. Hence, at least one correct process $p$ has decided by line~\ref{l:end_phases}. Process $p$ receives all of the correct help messages at line~\ref{l:r_help} and answers them at line~\ref{l:a_help}.
All correct processes that asked for help then decide at line~\ref{l:got_help}.

It remains to examine the case that at least one correct process $p$ receives a fallback certificate. It follows from Lemma~\ref{lemma:all_fb} that all correct processes receive the certificate within at most $\delta$ time of $p$. 
Then, by the code, all correct process execute the fallback algorithm at line~\ref{l:fb} and by Lemma~\ref{lemma:simulate} and the fallback algorithm solves BA, providing termination.

\end{proof}

\begin{lemma}[Unique Validity]
\label{lemma:unique_val}
In Algorithm~\ref{alg:BA_main} if a correct process decides $v$ then either $v=\bot$ or $\textit{validate}(v) = \textit{true}$, and if $v=\bot$ then more than one valid value exists in the run.

\end{lemma}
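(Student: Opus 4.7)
The plan is to prove the two clauses of unique validity separately, in each case by case analysis on where the \emph{decision} variable is written in Algorithm~\ref{alg:BA_main}. There are only four such write sites: line~\ref{l:inv_phase}, line~\ref{l:got_help}, line~\ref{l:final_dec}, and line~\ref{l:ret_def}. The first three always produce non-$\bot$ values and the last produces $\bot$, so this partition lines up exactly with the two clauses.

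For the first clause (if $v\neq\bot$ then $\textit{validate}(v)=\textit{true}$), I would handle each non-$\bot$ assignment in turn. At line~\ref{l:inv_phase}, the assigned value comes from $invokePhase$, and Lemma~\ref{lemma:validity} gives validity. At line~\ref{l:got_help}, the guarding condition already tests that $v$ is valid and that a matching \emph{decide\_proof} is present. At line~\ref{l:final_dec}, the guard at line~\ref{l:pos_r1} explicitly requires \emph{fallback\_val} to be valid, so the stored decision is valid as well.

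For the second clause (if $v=\bot$ then more than one valid value exists), the only source of a $\bot$ decision is line~\ref{l:ret_def}, which is reached when the process is still undecided going into the fallback and \emph{fallback\_val} turns out to be not valid. The key intermediate step is a validity invariant on \emph{bu\_decision}: every correct process inputs a valid value to $\mathcal{A}_{\emph{fallback}}$. I would establish this by induction over the three update sites of \emph{bu\_decision}, namely its initialization to the valid input $v_i$, its assignment from \emph{decision} just before the while loop (valid by clause~1 whenever \emph{decision} is not \emph{undecided}), and its update inside the while loop at line~\ref{l:got_fb} (guarded to require a valid $v$ with a valid proof).

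Given this invariant, the conclusion follows from the strong unanimity of the fallback: if all correct processes had submitted the same valid value $w$, then $\mathcal{A}_{\emph{fallback}}$ would return $w$, contradicting that \emph{fallback\_val} is not valid. Hence at least two correct processes must have submitted distinct valid \emph{bu\_decision} values, and by the invariant each such value is either the input of a correct process or was accompanied by a valid proof that the adversary had to generate, so both values exist in the run in the sense of Definition~\ref{def:wBA}. The main obstacle is the validity invariant for \emph{bu\_decision}, since its proof must chain clause~1 into clause~2 and also track how decisions are propagated through the $\fb$ messages during the $2\delta$ safety window; everything else is a direct application of Lemma~\ref{lemma:validity} and code inspection.
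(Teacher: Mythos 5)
Your proposal is correct and follows essentially the same route as the paper: the first clause is dispatched by inspecting the guards at the sites where \emph{decision} is written (the paper does this in one line citing lines~\ref{l:final_dec}--\ref{l:ret_def}, while you also explicitly cover lines~\ref{l:inv_phase} and~\ref{l:got_help} via Lemma~\ref{lemma:validity} and the code), and the second clause uses the identical argument that all correct processes enter $\mathcal{A}_{\emph{fallback}}$ with valid inputs, so strong unanimity would force a valid output if those inputs coincided, whence a $\bot$ decision implies two distinct valid values exist in the run. Your explicit invariant on \emph{bu\_decision} is just a more careful spelling-out of the paper's parenthetical ``either their initial valid values, or a valid value they adopt at line~\ref{l:got_fb}.''
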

\begin{proof}
Let $v$ be the decision value of a correct process in Algorithm~\ref{alg:BA_main}.
First, by lines~\ref{l:final_dec}~--~\ref{l:ret_def} $\textit{validate}(v) = \textit{true}$ or $v=\bot$.
We prove that if $v = \bot$, then at least two valid values exist in the run.

By the code, all processes execute the fallback algorithm with valid inputs (either their initial valid values, or a valid value they adopt at line~\ref{l:got_fb}). By strong unanimity of $\mathcal{A}_\emph{fallback}$, if all correct processes start with the same valid value $v'$, then $v'$ must be the returned decision value. This contradicts the fact that $\bot$ is returned at line~\ref{l:ret_def}. 
Therefore, not all correct processes execute $\mathcal{A}_\emph{fallback}$ with the same value. As they all execute the fallback algorithm with valid inputs, it follows that at least two valid values exist in the run.

\end{proof}

In addition, we need to prove that every correct process updates its \emph{decision} at most once.

\begin{lemma}
\label{lemma:dec_at_most_1}
In Algorithm~\ref{alg:BA_main} all correct process decide at most once.
\end{lemma}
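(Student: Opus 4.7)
My plan is to enumerate the lines at which \emph{decision} is reassigned in Algorithm~\ref{alg:BA_main} and to verify, site by site, that each such reassignment can fire only when the current value is \emph{undecided}. The update sites are line~\ref{l:inv_phase} (inside the phase loop), line~\ref{l:got_help} (after receiving a help response), line~\ref{l:final_dec} (after a valid fallback output), and line~\ref{l:ret_def} (the else-branch of the fallback check).

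The first three sites are syntactically guarded by the test $\emph{decision} = \emph{undecided}$. Hence once the first such guard succeeds and its assignment flips \emph{decision} to a non-undecided value, every subsequent guard among these three immediately fails and no further reassignment occurs at those sites.

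The subtle site is line~\ref{l:ret_def}: the assignment $\emph{decision}\gets \bot$ is reached when the conjunction ``$\emph{fallback\_val}$ valid and $\emph{decision}=\emph{undecided}$'' is false, which in particular could be triggered by $\emph{decision}\ne\emph{undecided}$. The intended reading of the block at lines~\ref{l:pos_r1}--\ref{l:pos_r2} is that it acts as a single conditional update to the still-undecided slot. To justify this, I will appeal to Lemma~\ref{lemma:possible_def}: if a correct process $p_i$ has already set $\emph{decision} = v \ne \emph{undecided}$ before entering this block, then $v$ is valid, every correct process has adopted $v$ as \emph{bu\_decision} via the safety window (line~\ref{l:got_fb}) or earlier, and by strong unanimity of $\mathcal{A}_{\emph{fallback}}$ we obtain $\emph{fallback\_val} = v$, which is valid. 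Hence whenever the else-branch is executed, the guard $\emph{decision} = \emph{undecided}$ must in fact have held on entry, so no overwrite of a prior decision takes place.

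With every update site pinned to the precondition $\emph{decision} = \emph{undecided}$, the lemma is immediate: the first reassignment monotonically lifts \emph{decision} out of the \emph{undecided} state, and no later reassignment can fire. The principal obstacle is exactly disentangling the literal pseudocode at line~\ref{l:ret_def} from its intended atomic semantics; Lemma~\ref{lemma:possible_def} together with strong unanimity of $\mathcal{A}_{\emph{fallback}}$ is what lets me rule out the stray overwrite and conclude.
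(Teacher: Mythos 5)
Your overall skeleton is the same as the paper's: enumerate the sites where \emph{decision} is assigned (lines~\ref{l:inv_phase}, \ref{l:got_help}, and \ref{l:final_dec}--\ref{l:ret_def}), argue each fires only when $\emph{decision}=\emph{undecided}$, and conclude since \emph{decision} is never reset to \emph{undecided}. The paper's proof simply asserts that all sites are so guarded; you rightly notice that line~\ref{l:ret_def} is the one site where this is not literally true, which is a sharper reading of the code than the paper's own one-line argument.

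However, your repair of that site does not go through. You invoke Lemma~\ref{lemma:possible_def} to conclude that if $\emph{decision}=v\neq\emph{undecided}$ on entry to the block, then $\emph{fallback\_val}=v$ is valid, and from this you infer that ``whenever the else-branch is executed, the guard $\emph{decision}=\emph{undecided}$ must in fact have held on entry.'' That inference is backwards: the else-branch is taken exactly when the conjunction at line~\ref{l:pos_r1} is false, and with $\emph{decision}=v\neq\emph{undecided}$ the second conjunct already fails, so under the literal code the else-branch \emph{is} executed and overwrites $v$ with $\bot$ --- regardless of the validity of $\emph{fallback\_val}$ that you established. Showing $\emph{fallback\_val}=v$ is valid makes the first conjunct true but cannot rescue the second. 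The only way to close this is the step you gesture at but do not actually derive: read (or amend) lines~\ref{l:pos_r1}--\ref{l:pos_r2} so that the $\bot$ assignment is itself conditioned on $\emph{decision}=\emph{undecided}$. That is a stipulation about the pseudocode's intended semantics (the one the paper's proof silently adopts), not a consequence of Lemma~\ref{lemma:possible_def}; once adopted, the lemma follows exactly as in the paper, and your appeal to strong unanimity of $\mathcal{A}_{\emph{fallback}}$ becomes unnecessary for this particular statement (it is needed for agreement, not for decide-at-most-once).
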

\begin{proof}

Any correct process updates $\emph{decision}$ at line~\ref{l:inv_phase}, line~\ref{l:got_help} or lines~\ref{l:final_dec}~--~\ref{l:ret_def}.
In all cases, it only does so if $\emph{decision}=\emph{undecided}$. Since by the code it does not update $\emph{decision}$ to the value $\emph{undecided}$, it follows that $\emph{decision}$ is updated at most once.

\end{proof}

From Lemmas~\ref{lemma:agreement_ba}, \ref{lemma:liveness}, \ref{lemma:unique_val}, and~\ref{lemma:dec_at_most_1} we conclude:
\begin{theorem}
Algorithm~\ref{alg:BA_main} solves weak BA.
\end{theorem}

\section{Strong BA: Correctness}
\label{strong_ba_correct}

\begin{lemma}
\label{lem:all_fb}
If some correct process executes the fallback algorithm in Algorithm~\ref{alg:FF_strong_BA}, all correct process do so and they all start at at most $\delta$ time apart.
\end{lemma}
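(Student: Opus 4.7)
The argument mirrors that of Lemma~\ref{lemma:all_fb} almost verbatim, because the fallback gadget in Algorithm~\ref{alg:FF_strong_BA} is structurally the same as in Algorithm~\ref{alg:BA_main}: a correct process only reaches line~\ref{ll:fb} after a $2\delta$ delay, and whenever it first schedules that delay it also broadcasts a $\fb$-message. So the plan is to pick $p$ to be the first correct process that reaches line~\ref{ll:fb}, say at local time $t$, and then trace backwards to show that $p$ must have broadcast a $\fb$-message at time $t-2\delta$.

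More concretely, I would inspect the two places where $\emph{fallback\_start}$ is assigned a finite value: in Round~5 just after the broadcast on line~\ref{ll:fall}, and in the body of the while loop when $p$ first receives a valid $\fb$-message from another process, in which case it re-broadcasts its own $\fb$-message before setting $\emph{fallback\_start}\gets\emph{now}+2\delta$. In either case, $p$ simultaneously sends a $\fb$-message to every process at time $t-2\delta$. By the reliable-link/synchrony assumption, every correct process receives this message by time $t-\delta$.

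Now I would consider an arbitrary correct process $q\ne p$ and split into two cases. If $q$ has not yet set $\emph{fallback\_start}$ by time $t-\delta$, then upon delivery of $p$'s broadcast $q$ enters the branch in the while loop, echoes a $\fb$-message, and sets $\emph{fallback\_start}\gets\emph{now}+2\delta\le (t-\delta)+2\delta=t+\delta$, so $q$ reaches line~\ref{ll:fb} by time $t+\delta$. If $q$ already has a finite $\emph{fallback\_start}$ at time $t-\delta$, then $q$'s own triggering broadcast happened at a time $\tau\le t-\delta$, so $q$'s scheduled execution of $\mathcal{A}_{\emph{fallback}}$ is at time $\tau+2\delta\le t+\delta$; and by the minimality of $p$, this scheduled time is at least $t$.

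The main (admittedly minor) obstacle will be cleanly handling the "already scheduled" corner case, namely showing that $q$'s earlier scheduling cannot precede $p$'s start: this is ruled out precisely because $p$ was chosen to be first, so $q$'s execution time lies in $[t,t+\delta]$. Combining both cases, every correct process executes $\mathcal{A}_{\emph{fallback}}$ at some moment in the interval $[t,t+\delta]$, which yields the claim that all correct processes run the fallback and their start times differ by at most $\delta$.
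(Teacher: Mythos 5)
Your proof is correct and takes essentially the same approach as the paper: the paper simply states that the proof is similar to that of Lemma~\ref{lemma:all_fb}, whose argument is exactly your choice of the first correct process $p$ reaching line~\ref{ll:fb} at time $t$, tracing back to its broadcast at $t-2\delta$, and using synchrony to conclude all others start by $t+\delta$. Your explicit handling of the ``already scheduled'' corner case is a minor elaboration of what the paper dismisses with ``if they have not done so earlier.''
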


Proof is similar to Lemma~\ref{lemma:all_fb} in Section~\ref{section:aba}.

\begin{lemma}[Agreement]
\label{lemma:str_agreement}
In Algorithm~\ref{alg:FF_strong_BA} all correct process decide on the same value.
\end{lemma}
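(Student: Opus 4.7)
The plan is to prove agreement by a case analysis on where each correct process installs its final value in $\emph{decision}$. In Algorithm~\ref{alg:FF_strong_BA} this happens either at line~\ref{ll:dec} (the ``fast path'' after receiving a $\decide$ certificate from the leader) or at line~\ref{ll:dec_fb_v} (the ``fallback path'' after running $\mathcal{A}_{\emph{fallback}}$), so I will show that the fast-path and fallback-path outcomes coincide.

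If no correct process decides at line~\ref{ll:dec}, then every correct process takes the else branch in round~5 and broadcasts $\lr{\fb,\bot,\bot}$. By Lemma~\ref{lem:all_fb} every correct process reaches line~\ref{ll:fb} within $\delta$ of each other, and the $2\delta$-long rounds in $\mathcal{A}_{\emph{fallback}}$ yield a valid synchronous run by the argument already carried out in Lemma~\ref{lemma:simulate}. Agreement of $\mathcal{A}_{\emph{fallback}}$ then gives every correct process the same $\emph{fallback\_val}$, which they all install at line~\ref{ll:dec_fb_v}.

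Now suppose some correct process $p$ decides $v$ at line~\ref{ll:dec}; I first argue uniqueness along the fast path. The $\decide$ certificate that $p$ verifies carries $n=2t+1$ unique signatures, so every correct process signed a $\decide$ message in round~3. In round~3 each correct process signs a $\decide$ only for the value in the (single) $\propose$ certificate it accepts from the leader, hence it contributes its signature to at most one $\decide$ certificate. It follows that any other correct $q$ deciding at line~\ref{ll:dec} must accept a certificate on the very same $v$, for otherwise the conflicting certificate could collect at most $f\le t$ Byzantine signatures, far short of the $n=2t+1$ required.

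Finally, consider a correct process $q$ that reaches line~\ref{ll:dec_fb_v} through the fallback. Since $q$ does not decide at line~\ref{ll:dec}, it broadcasts $\lr{\fb,\bot,\bot}$ in round~5; $p$ receives this message, enters the while loop, sets $\emph{fallback\_start}$, and rebroadcasts a fallback message that carries $v$ together with the valid $QC_{\decide}(v)$ it holds. By synchrony every correct process receives $p$'s fallback during the $2\delta$ safety window and updates $\emph{bu\_decision}\gets v$ at line~\ref{ll:bu}, so every correct process invokes $\mathcal{A}_{\emph{fallback}}$ at line~\ref{ll:fb} with the same valid input $v$. Strong unanimity of $\mathcal{A}_{\emph{fallback}}$ then forces $\emph{fallback\_val}=v$, so $q$'s decision coincides with $p$'s. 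The main obstacle I expect is the uniqueness argument in the third paragraph: it hinges on reading round~3 as committing each correct process to at most one $\decide$ value, after which the $(n,n)$-threshold on $QC_{\decide}$ rules out any conflicting certificate, and the rest of the proof reduces to careful bookkeeping around the $2\delta$ fallback window and an appeal to strong unanimity.
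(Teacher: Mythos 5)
Your proof is correct and follows essentially the same route as the paper's: uniqueness of the fast-path decision via the $(n,n)$-threshold on $QC_{\decide}$ together with each correct process signing at most one $\decide$ message, propagation of a decided value during the $2\delta$ safety window via Lemma~\ref{lem:all_fb}, and strong unanimity of $\mathcal{A}_{\emph{fallback}}$ to align the fallback output with any fast-path decision. Your treatment is somewhat more explicit than the paper's (in particular the case where no process decides at line~\ref{ll:dec}, which the paper leaves implicit), but the decomposition and key ideas are the same.
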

\begin{proof}

First, as correct processes only sign one $\decide$ message, every process that receives $QC_{\decide}(v)$ receives the same quorum certificate.
Thus, all correct processes that decide at line~\ref{ll:dec} decide the same $v$. If at least one correct process receives a fallback message then by Lemma~\ref{lem:all_fb}, they all execute the fallback algorithm at most $\delta$ time apart. Thus, if at least one correct process decides at line~\ref{ll:dec}, then all correct processes that have not yet decided learn about $v$ in the $2\delta$ safety window, and adopt it as their initial value for the fallback (line~\ref{ll:bu}).
It follows that all correct processes decide with the same input value $v$ and by strong unanimity this is the only possible decision.

\end{proof}

\begin{lemma}[Termination]
\label{lemma:str_live}
In Algorithm~\ref{alg:FF_strong_BA} all correct process decide.
\end{lemma}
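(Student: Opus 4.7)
The plan is to split on whether any correct process sends a fallback message in round~5 and then use the synchronization machinery already developed for the weak BA algorithm.

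First, let $p$ be a correct process. By round~5 either (i) $p$ receives a valid $\lr{\decide, v, QC_{\decide}(v)}_{\emph{leader}}$ and sets $\emph{decision}\gets v\neq\bot$ at line~\ref{ll:dec}, terminating immediately; or (ii) $p$ falls into the \textbf{else} branch at line~\ref{ll:fall}, broadcasts a fallback message, and sets $\emph{fallback\_start}\gets \emph{now}+2\delta$. The goal is to show that in case~(ii) $p$ still ends up with $\emph{decision}\neq\bot$, and that processes in case~(i) remain consistent with the termination guarantee.

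I would then invoke Lemma~\ref{lem:all_fb}: once any correct process broadcasts a fallback message, all correct processes receive a valid $\lr{\fb, \cdot, \cdot}$, re-broadcast it (via the $\emph{fallback\_start}=\infty$ check in the while loop), and thus each correct process sets a finite $\emph{fallback\_start}$ within $\delta$ of the first broadcast. Consequently every correct process exits the while loop and reaches line~\ref{ll:fb} within a window of length at most $\delta$. At this point I would apply the analogue of Lemma~\ref{lemma:simulate} (the same synchronization argument works verbatim for $\mathcal{A}_{\emph{fallback}}$ with round duration $2\delta$) to conclude that the joint execution of $\mathcal{A}_{\emph{fallback}}$ is a correct synchronous run. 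Since $\mathcal{A}_{\emph{fallback}}$ itself satisfies termination, every correct process obtains a well-defined $\emph{fallback\_val}\in\{0,1\}$, hence $\emph{fallback\_val}\neq\bot$. Then line~\ref{ll:dec_fb_v} sets $\emph{decision}\gets \emph{fallback\_val}$ for any process whose $\emph{decision}$ was still $\bot$, which covers all processes in case~(ii).

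It remains to handle the complementary scenario: no correct process broadcasts a fallback message. By the structure of the round~5 conditional, this means every correct process fell into the \textbf{if} branch and set $\emph{decision}$ at line~\ref{ll:dec}, so termination holds trivially. The main obstacle is the subtle timing argument for case~(ii) — we need that once one correct process decides to fall back, they all do, and they all line up closely enough to simulate a synchronous execution of $\mathcal{A}_{\emph{fallback}}$; but since Lemma~\ref{lem:all_fb} and (the analogue of) Lemma~\ref{lemma:simulate} are already in hand, this reduces to invoking them and noting that binary strong BA always returns a non-$\bot$ value. Combining the two cases yields that every correct process eventually sets $\emph{decision}$ to a non-$\bot$ value, completing the proof.
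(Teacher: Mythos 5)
Your proof is correct and follows essentially the same route as the paper: the same case split on whether every correct process decides at line~\ref{ll:dec} versus some process broadcasting a fallback message at line~\ref{ll:fall}, followed by the same invocation of Lemma~\ref{lem:all_fb} and Lemma~\ref{lemma:simulate} to reduce termination to that of $\mathcal{A}_{\emph{fallback}}$. The extra details you supply (the re-broadcast via the $\emph{fallback\_start}=\infty$ check and the observation that binary strong BA returns a non-$\bot$ value) are consistent elaborations of the paper's argument rather than a different approach.
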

\begin{proof}
If not all correct processes decide by line~\ref{ll:dec}, then a correct process broadcasts a $\fb$ message at line~\ref{ll:fall}. It follows from Lemma~\ref{lem:all_fb} that all correct processes receive the certificate within at most $\delta$ time of $p$. 
Then, by the code, all correct process execute the fallback algorithm at line~\ref{ll:fb} and by Lemma~\ref{lemma:simulate} and the fallback algorithm solves strong BA, providing termination.
\end{proof}

\begin{lemma}[Validity]
\label{lemma:str_val}
In Algorithm~\ref{alg:FF_strong_BA} if all correct processes propose the same value $v$, then the output is $v$.
\end{lemma}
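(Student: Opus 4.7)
The plan is to prove validity by two steps: first, pin down which values can ever appear in a valid $QC_{\propose}$ or $QC_{\decide}$, and then case-split on whether a correct process decides at line~\ref{ll:dec} or only after the fallback. The structural claim I would begin with is that, under the hypothesis that every correct process proposes $v$, the only value that can appear in any valid $QC_{\propose}(\cdot)$ is $v$. A valid $QC_{\propose}(w)$ aggregates $t+1$ unique signatures; any $t+1$ signers contain at least one correct process; and each correct process signs only its own initial value $v$ at line~\ref{ll:send_v}. Combined with the fact that correct processes send $\lr{\decide,w}$ only after receiving a valid $\lr{\propose,w,QC_{\propose}(w)}$ from the leader (line~\ref{ll:s_d}), the same counting argument forces any valid $QC_{\decide}(w)$ to satisfy $w=v$.

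With that invariant in hand, I would split into two cases. In the first case, at least one correct process reaches line~\ref{ll:dec}; there, $\emph{decision}$ is set from a valid decide certificate, which by the invariant is $v$. Agreement (Lemma~\ref{lemma:str_agreement}) then propagates $v$ to every other correct process, either through an analogous direct reception or via the safety-window adoption and the fallback. In the second case, no correct process decides at line~\ref{ll:dec}; every correct process broadcasts a $\fb$ message and, by Lemma~\ref{lem:all_fb}, all correct processes enter $\mathcal{A}_\emph{fallback}$ within $\delta$ of one another, so Lemma~\ref{lemma:simulate} guarantees that the embedded run of $\mathcal{A}_\emph{fallback}$ is a legitimate synchronous run of that protocol. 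The remaining obligation is to show that each correct process invokes $\mathcal{A}_\emph{fallback}$ with initial value $v$, so that the strong unanimity of $\mathcal{A}_\emph{fallback}$ forces $\emph{fallback\_val}=v$, and hence line~\ref{ll:dec_fb_v} yields $\emph{decision}=v$.

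The step I expect to be the main obstacle is exactly that last one: tracking the value of $\emph{bu\_decision}$ through all the paths of the algorithm. One must audit every assignment to $\emph{bu\_decision}$ in Algorithm~\ref{alg:FF_strong_BA}, namely its initialization, the overwrite $\emph{bu\_decision}\gets \emph{decision}$ after round~5, and the safety-window update at line~\ref{ll:bu}, and argue that at the moment $\mathcal{A}_\emph{fallback}$ is invoked it equals $v$ for every correct process. The invariant from the first paragraph already guarantees that any $v'$ adopted via a received $\fb$ message at line~\ref{ll:bu} must be $v$, since $proof_{p'}$ is a valid $QC_{\decide}(v')$ and the invariant rules out $v'\neq v$. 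The bookkeeping then reduces to two sub-cases: if some correct process decides at line~\ref{ll:dec}, it rebroadcasts its $\fb$ message carrying the valid proof when it first receives a $\fb$ from a non-decider (the $\emph{fallback\_start}=\infty$ branch), and every non-decider updates $\emph{bu\_decision}$ to $v$ within the $2\delta$ safety window; if nobody decides, one must argue that $\emph{bu\_decision}$ carries the original proposal $v_i=v$ into $\mathcal{A}_\emph{fallback}$ via the initialization. Once this case analysis of $\emph{bu\_decision}$ is laid out, strong unanimity of $\mathcal{A}_\emph{fallback}$ closes the proof.
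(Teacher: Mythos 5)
Your proposal is correct and follows essentially the same route as the paper's proof: establish that any valid $QC_{\propose}$ (and hence any valid $QC_{\decide}$) can only certify $v$ because $t+1$ signers must include a correct process, and then observe that every correct process enters $\mathcal{A}_\emph{fallback}$ with $v$ (either its original proposal or a certified $v$), so strong unanimity closes the argument. Your extra bookkeeping on $\emph{bu\_decision}$ is a more explicit version of the paper's one-line claim that the fallback is run with "either the original initial values or with a value that has a corresponding $\decide$ quorum certificate," so no substantive difference.
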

\begin{proof}
Correct processes only send $\decide$ messages on values with valid $\propose$ quorum certificates. Note that such a quorum certificate can only be formed with $t+1$ unique signatures. Hence, if all correct processes propose the same value $v$, then the only possible $\propose$ quorum certificate is with $v$. As a result, the only possible $\decide$ quorum certificate is with $v$ as well.

The fallback algorithm is executed with either the original initial values or with a value that has a corresponding $\decide$ quorum certificate. Thus, if correct processes execute the fallback algorithm, they all start with $v$ and by strong unanimity of $\mathcal{A}_\emph{fallback}$, the decision is $v$.
\end{proof}

Finally, we prove that every correct process updates its \emph{decision} at most once.

\begin{lemma}
\label{lemma:dec_at_most_1_s}
In Algorithm~\ref{alg:FF_strong_BA} all correct process decide at most once.
\end{lemma}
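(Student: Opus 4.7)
The plan is to mirror the structure of Lemma~\ref{lemma:dec_at_most_1} by enumerating every line at which \emph{decision} can be written, observing that each such write is guarded by the condition \emph{decision}$=\bot$, and confirming that each write installs a non-$\bot$ value, so the guard can fire at most once.

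Inspecting Algorithm~\ref{alg:FF_strong_BA}, the variable \emph{decision} is initialized to $\bot$ and is subsequently assigned only at line~\ref{ll:dec} (where it is set to the value $v$ extracted from a received $\lr{\decide,v,QC_{\decide}(v)}_{\emph{leader}}$) and at line~\ref{ll:dec_fb_v} (where it is set to \emph{fallback\_val}). In both cases the surrounding conditional explicitly requires that \emph{decision}$=\bot$ at the moment of the assignment, so it suffices to verify that each of these assignments stores a non-$\bot$ value.

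For line~\ref{ll:dec}, the value $v$ is carried inside a valid $\decide$ certificate, which in turn can only have been formed (line~\ref{ll:bc_d_c}) out of $\decide$ messages sent by correct processes at line~\ref{ll:s_d} that originated from the binary inputs of correct processes via the $\propose$ certificate at line~\ref{ll:bc_p_c}. Since the protocol is binary, $v\in\{0,1\}$ and in particular $v\neq \bot$. For line~\ref{ll:dec_fb_v}, \emph{fallback\_val} is the output of $\mathcal{A}_\emph{fallback}$, which is a strong BA protocol invoked with an initial value \emph{bu\_decision} that is always a valid (non-$\bot$) value: it is either the original input $v_i$, a decision previously installed at line~\ref{ll:dec}, or a value $v$ adopted at line~\ref{ll:bu} for which a valid $\decide$ proof exists. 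By termination and validity of $\mathcal{A}_\emph{fallback}$, its output lies in the set of proposed values and is therefore non-$\bot$.

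Combining these observations: once a correct process executes the assignment at line~\ref{ll:dec}, \emph{decision} becomes non-$\bot$, so the guard at line~\ref{ll:dec_fb_v} fails and no further update occurs; if it skips line~\ref{ll:dec}, then at most one subsequent update can happen at line~\ref{ll:dec_fb_v}, again leaving \emph{decision} non-$\bot$ thereafter. No step is expected to be subtle; the only point requiring mild care is to rule out \emph{fallback\_val}$=\bot$, and that is immediate from the strong BA specification of $\mathcal{A}_\emph{fallback}$ together with the fact that correct processes always feed it a non-$\bot$ input.
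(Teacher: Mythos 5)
Your proof is correct and follows essentially the same argument as the paper's: both enumerate the two assignment sites (lines~\ref{ll:dec} and~\ref{ll:dec_fb_v}), observe that each is guarded by \emph{decision}$=\bot$, and conclude that at most one write can occur. The extra material on why the written values are non-$\bot$ is harmless but not needed for the second write site, since line~\ref{ll:dec_fb_v} is the last assignment in the algorithm anyway.
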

\begin{proof}

Any correct process updates $\emph{decision}$ either at line~\ref{ll:dec} or at line~\ref{ll:dec_fb_v}.
In both cases, it only does so if $\emph{decision}=\bot$. Since it does not update $\emph{decision}$ to the value $\bot$ at any step of the algorithm, it follows that $\emph{decision}$ is updated at most once.

\end{proof}

From Lemmas~\ref{lemma:str_agreement}, \ref{lemma:str_live}, \ref{lemma:str_val}, and~\ref{lemma:dec_at_most_1_s} we conclude:

\begin{theorem}
Algorithm~\ref{alg:FF_strong_BA} solves binary strong BA.
\end{theorem}

\end{document}